\documentclass[sn-mathphys-num]{sn-jnl}
\usepackage[T1]{fontenc}
\usepackage{amssymb,amsmath,latexsym}
\usepackage{amsthm}
\usepackage{mathrsfs}
\usepackage{diagbox}
\usepackage[shortlabels]{enumitem}
\usepackage{color}
\usepackage{comment}
\usepackage{anyfontsize}
\usepackage{xcolor}
\usepackage{calc}
\usepackage{hyperref}
\usepackage{appendix}
\usepackage{colortbl}
\usepackage{booktabs}
\usepackage{tabularx}
\usepackage{tikz-cd}
\usepackage{cleveref}
\usepackage{caption}
\usepackage{pgfplots}
\pgfplotsset{compat=1.18}
\usepackage{pgfplotstable}
\usepackage{longtable}

\usepackage[most]{tcolorbox}
\hypersetup{
  colorlinks   = true,
  urlcolor     = blue,
  linkcolor    = blue,
  citecolor    = blue
}
\allowdisplaybreaks
\newcommand{\mb}{\mathbb}
\newcommand{\mc}{\mathcal}

\newtheorem{theorem}{Theorem}[section]
\newtheorem{definition}[theorem]{Definition}

\newtheorem{lemma}[theorem]{Lemma}

\newtheorem{corollary}[theorem]{Corollary}
\newtheorem{proposition}[theorem]{Proposition}

\DeclareMathOperator{\Tr}{Tr}
\DeclareMathOperator{\rank}{rank}

\newcommand{\diag}{\operatorname{diag}}

\allowdisplaybreaks

\newcommand{\qd}{\end{proof}\vspace{0.5ex}}
\newcommand{\prf}{\begin{proof}[\bf Proof:]}

 \definecolor{thmblue}{HTML}{1F4E79} \definecolor{thmblueback}{HTML}{EEF6FF} \definecolor{lemmapurple}{HTML}{5B3A8E} \definecolor{lemmapurpleback}{HTML}{F5F0FF} \definecolor{defgreen}{HTML}{2F6B4F} \definecolor{defgreenback}{HTML}{F0FFF4} \definecolor{conjred}{HTML}{9A3412} \definecolor{conjredback}{HTML}{FFF7ED} \definecolor{remarkgray}{HTML}{555555} \definecolor{remarkgrayback}{HTML}{F7F7F7} 
 \tcbset{ theoremstylebox/.style={ enhanced, breakable, boxrule=0.7pt, arc=1.2mm, outer arc=1.2mm, left=7pt, right=7pt, top=7pt, bottom=7pt, before skip=10pt, after skip=10pt, borderline west={2.5pt}{0pt}{#1}, } } 
 \tcolorboxenvironment{theorem}{ theoremstylebox=thmblue, colback=thmblueback, colframe=thmblue, } \tcolorboxenvironment{proposition}{ theoremstylebox=thmblue, colback=thmblueback, colframe=thmblue, } \tcolorboxenvironment{prop}{ theoremstylebox=thmblue, colback=thmblueback, colframe=thmblue, } \tcolorboxenvironment{corollary}{ theoremstylebox=thmblue, colback=thmblueback, colframe=thmblue, } \tcolorboxenvironment{lemma}{ theoremstylebox=lemmapurple, colback=lemmapurpleback, colframe=lemmapurple, } \tcolorboxenvironment{claim}{ theoremstylebox=lemmapurple, colback=lemmapurpleback, colframe=lemmapurple, } \tcolorboxenvironment{definition}{ theoremstylebox=defgreen, colback=defgreenback, colframe=defgreen, } \tcolorboxenvironment{assumption}{ theoremstylebox=defgreen, colback=defgreenback, colframe=defgreen, } \tcolorboxenvironment{conjecture}{ theoremstylebox=conjred, colback=conjredback, colframe=conjred, } \tcolorboxenvironment{example}{ theoremstylebox=remarkgray, colback=remarkgrayback, colframe=remarkgray, } \tcolorboxenvironment{remark}{ theoremstylebox=remarkgray, colback=remarkgrayback, colframe=remarkgray, } \tcolorboxenvironment{rem}{ theoremstylebox=remarkgray, colback=remarkgrayback, colframe=remarkgray, }

\renewcommand{\footnoterule}{%
  \kern -3pt
  \hrule width \textwidth height 0.4pt
  \kern 2.6pt
}

\title{Counterexamples to additivity of minimum output $p$-R\'enyi entropy of quantum channels for all $p>3/4$ and $0\leq p<1/4$}
\author[1,3,4]{Debbie Leung}
\author[5]{Benjamin Lovitz}
\author[2,3]{Peixue Wu}

\affil[1]{Dept. of Combinatorics and Optimization, University of Waterloo, Waterloo, ON, Canada}
\affil[2]{Dept. of Applied Mathematics, University of Waterloo, Waterloo, ON, Canada}
\affil[3]{Institute for Quantum Computing, University of Waterloo, Waterloo, ON, Canada}
\affil[4]{Perimeter Institute for Theoretical Physics, Waterloo, ON, Canada}
\affil[5]{Dept. of Computer Science and Software Engineering, Concordia University, Montreal, QC, Canada}
\date{}

\begin{document}
\abstract{
The additivity of minimum output entropies is a central problem in quantum
information theory. Nonadditivity is known for every R\'enyi order $p>1$,
at the von Neumann point $p=1$, and near $p=0$, while most of the interval
$0<p<1$ has remained open. In this work, we show that 
for each $p > 3/4$ and $0\le p < 1/4$, there exists a finite dimensional
projection-induced channel, such that additivity of minimum output $p$-R\'enyi entropy fails. 
The proof combines two correlated random-projection constructions: a product–conjugate Bell-state witness 
for $p>3/4$, and a transpose-complement rank-defect witness for $p<1/4$. Thus the unresolved part is reduced to $[1/4,3/4]$.
Our estimates also improve the output dimension threshold for additivity violation of minimum output von Neumann entropy, 
first established in Belinschi, Collins and Nechida.
}
\maketitle
\tableofcontents

\section{Introduction}\label{sec:introduction}

Let $\Phi$ be a quantum channel. Its minimum output entropy is
\begin{equation}
    H^{\min}(\Phi)
    :=
    \min_{\rho}
    H\!\left(\Phi(\rho)\right),
    \qquad
    H(\sigma):=-\Tr(\sigma\log\sigma),
    \label{eq:min-output-entropy}
\end{equation}
For two channels, product inputs give
\begin{equation}
    H^{\min}(\Phi\otimes\Psi)
    \leq
    H^{\min}(\Phi)+H^{\min}(\Psi).
    \label{eq:min-output-subadditivity}
\end{equation}
For classical stochastic channels, equality always holds. For quantum channels, by contrast, a pure
input to \(\Phi\otimes\Psi\) may be entangled.  The minimum-output-entropy
additivity conjecture~\cite{AmosovHolevoWerner2000} asked whether equality nevertheless always holds
in \eqref{eq:min-output-subadditivity}.

More generally, for \(0\leq p\le \infty\), denote the \(p\)-R\'enyi entropy
by $ S_p$ and put
\begin{equation}
    S_p^{\min}(\Phi)
    :=
    \min_{\rho}
    S_p\!\left(\Phi(\rho)\right).
    \label{eq:min-output-renyi}
\end{equation}
We seek orders \(p\) for which there exist channels satisfying
\begin{equation}
    S_p^{\min}(\Phi\otimes\Psi)
    <
    S_p^{\min}(\Phi)+S_p^{\min}(\Psi).
    \label{eq:renyi-additivity-violation}
\end{equation}
At \(p=1\), $S_p = H$ is the von Neumann entropy and the problem is closely connected to classical communication
over quantum channels.  The Holevo--Schumacher--Westmoreland theorem
expresses the unassisted classical capacity as
\begin{equation}
    C(\Phi)
    =
    \lim_{n\to\infty}
    \frac{1}{n}\chi\!\left(\Phi^{\otimes n}\right)
    \label{eq:regularized-holevo-capacity}
\end{equation}
\cite{SchumacherWestmoreland1997,Holevo1998}.  Shor proved that, as
universal statements over all channels, additivity of minimum output
entropy is equivalent to additivity of the Holevo quantity and to
additivity and strong superadditivity of entanglement of formation
\cite{Shor2004}.  Thus, minimum-output-entropy additivity is one of the
central problems in additivity questions in quantum information.

The first counterexamples were obtained away from \(p=1\).  Werner and
Holevo constructed an explicit violation for \(p>4.79\)
\cite{WernerHolevo2002}, and Hayden and Winter subsequently proved
nonadditivity for every \(p>1\) using random channels
\cite{Hayden_2008}.  Hastings then settled the von Neumann case
\(p=1\) by a random finite-dimensional construction
\cite{Hastings_2009}.  Concentration of measure and asymptotic geometric
analysis subsequently clarified these random-channel mechanisms and
related them to almost Euclidean sections of high-dimensional convex
bodies
\cite{BrandaoHorodecki2010,FukudaKingMoser2010,
AubrunSzarekWerner2010,AubrunSzarekWerner2011}. Free probability techniques can improve the estimate sharply \cite{Belinschi_2012,Belinschi_2016,CollinsFukudaNechita2015}. Nevertheless, all the arguments remain probabilistic and a deterministic
finite-dimensional realization of its \(p=1\) mechanism remains
elusive.

Constructive counterexamples for every \(p>2\) were later obtained in
Refs.~\cite{Grudka_2010,Szczygielski_2024}; a recent preprint extends
such constructions to every \(p>1\)~\cite{Derksen_2026}.  At the
opposite end of the R\'enyi scale, Cubitt, Harrow, Leung, Montanaro, and
Winter proved nonadditivity at \(p=0\), and consequently for all
sufficiently small positive \(p\)~\cite{Cubitt_2008}.  They also gave an
explicit pair of channels from input dimension \(4\) to output dimension
\(3\) whose minimum output rank is nonmultiplicative.  Numerical evidence
suggested that this pair continues to violate additivity up to
approximately \(p=0.11\), but no rigorous endpoint was obtained.   

The intermediate regime \(0<p<1\) therefore remains substantially less
understood.  A fixed finite-dimensional violation at \(p=1\) persists,
by continuity, on an interval \((1-\varepsilon,1]\) depending on the
chosen pair of channels.  This observation gives neither an explicit
value of \(\varepsilon\) nor a random-channel mechanism controlled
through the von Neumann point.  Our main result supplies quantitative
ranges on both sides of \(p=1\).

\begin{theorem}\label{thm:main}
For every
\begin{equation}
    p\in
    \left[0,\frac14\right)
    \cup
    \left(\frac34,\infty\right],
    \label{eq:main-p-range}
\end{equation}
there exist finite-dimensional projection-induced quantum channels
\(\Phi, \Psi\) such that
\begin{equation}
    S_p^{\min}
    \!\left(\Phi\otimes \Psi\right)
    <
    S_p^{\min}(\Phi) + S_p^{\min}(\Psi).
    \label{eq:main-violation}
\end{equation}
\end{theorem}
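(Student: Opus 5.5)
The plan splits into the two regimes named in the abstract, each handled by a random projection-induced channel $\Phi_P(\rho)=\Tr_{\mathbb C^k}(P\rho P)$ with $P$ a uniformly random projection of rank $d$ onto a subspace of $\mathbb C^k\otimes\mathbb C^n$. For $p>3/4$ (which contains $p\ge1$) we take $\Psi=\overline{\Phi}$, the complex conjugate channel. As in Hayden--Winter and Hastings, we test the product channel on the maximally entangled input (the Bell state) on the input spaces. A direct computation shows that $(\Phi\otimes\overline\Phi)(|\omega\rangle\langle\omega|)$ has a distinguished eigenvalue at least $d/(nk)$, roughly the ratio of input to environment dimension. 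The remaining spectrum has total mass $1-d/(nk)$ spread over at most $k^2$ levels. This gives the upper bound
\[
S_p^{\min}(\Phi\otimes\overline\Phi)\le \frac{1}{1-p}\log\Bigl(t^p+(1-t)^p (k^2-1)^{1-p}\Bigr),\qquad t\approx d/(nk).
\]
We choose the scaling $d=\lambda nk$ with $\lambda\in(0,1)$ and let $n\to\infty$ with $k$ fixed. This is the Belinschi--Collins--Nechita regime.

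For the single-channel lower bound we use the free-probability description of the output set, the collection of all $\Phi(|x\rangle\langle x|)$. Almost surely this set converges to a deterministic convex body $K_{k,\lambda}$ in the simplex, described by the free compression norm bound $\|\cdot\|_{(t)}$. The minimal $p$-R\'enyi entropy over $K_{k,\lambda}$ is attained at the extreme point, where one eigenvalue is as large as possible and the other $k-1$ are equal. This gives an explicit lower bound $f_p(k,\lambda)$ for $S_p^{\min}(\Phi)$, and by conjugation the same bound holds for $S_p^{\min}(\overline\Phi)$. The violation then reduces to the scalar inequality
\[
\text{(Bell bound)}<2f_p(k,\lambda)
\]
for some choice of $k$ and $\lambda$. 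I would analyze this asymptotically as $k\to\infty$ with $\lambda$ tuned. The Bell term gains roughly a factor $\lambda^p$ relative to $k^{2(1-p)}$, and comparing exponents should produce exactly the threshold $p>3/4$. Finally, a finite-$n$ concentration argument converts the almost-sure limit into the existence of a fixed finite-dimensional pair.

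For $0\le p<1/4$ we use rank defects instead. Take $\Psi=\Phi^{c}$ composed with a transpose. Choose a subspace $V$ of the bipartite space that contains no product vectors of low Schmidt rank, which holds generically when $\dim V$ is below the CHLMW bound. Then $S_0^{\min}(\Phi)$ equals $\log$ of the full rank. On the product channel, the maximally entangled input (or its transpose-twisted version) yields an output whose rank drops by a definite amount, because of an identity of the form $(P\otimes P^{T})|\omega\rangle=|\omega_V\rangle$. Continuity in $p$ then extends the violation to an interval. To make that interval quantitative, bound $S_p$ of the product output by the rank-defect spectrum. Note that the output is nearly flat on its support except for one large eigenvalue of size about $d/(nk)$. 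The single-channel lower bound for $S_p$ can be controlled via $S_p\ge S_\infty$ together with the free-probability largest-eigenvalue estimate. Optimizing the dimensions should give the endpoint $1/4$.

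The main obstacle is the two scalar optimizations: showing that the explicit free-probability lower bound beats the Bell upper bound for every $p>3/4$, and that the rank-defect bound survives up to $p<1/4$. Both are uniform in $p$ and require careful asymptotics near the endpoints. I would first attack them with the $k\to\infty$ expansion, choosing $\lambda=k^{-\alpha}$ and matching the leading exponents. That determines $\alpha$; then error terms would be handled by monotonicity in $p$.
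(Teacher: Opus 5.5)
\textbf{Low $p$.} Lower-bounding the single-channel entropy by $S_\infty$ cannot give the endpoint $1/4$, for a structural reason. That bound does not depend on $p$, while any upper bound $S_p(\omega_{\mathrm{out}})$ on the joint output is nonincreasing in $p$. So a sufficient condition $2S_\infty^{\min}(\Phi)>S_p(\omega_{\mathrm{out}})$, once true at some $p_0$, holds for all $p\ge p_0$. It can only produce sets of $p$ that are closed upwards, never $[0,1/4)$.

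It is also quantitatively hopeless. Even in the paper's half-rank model, the largest output eigenvalue is about $k^{-1}(1+2k^{-1/2})$. Hence $2S_\infty^{\min}\approx 2\log k-4k^{-1/2}$, far below $\log(k^2-1)=2\log k-k^{-2}+O(k^{-4})$.

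In the paper the $p$-dependence sits on the single-channel side. Every $\lambda\in\Lambda_{k,t}$ has the form $\lambda_i=k^{-1}(1+q_i)$ with $\max_i|q_i|=O(k^{-1/2})$. Uniformly on this set, $S_p(\lambda)=\log k-\frac{p}{2k}\sum_iq_i^2+O(k^{-5/2})$, and the bound $\sum_iq_i^2\le\frac{4(1-t)}{tk}(1+o(1))$ is sharp. At $t=1/2$ this gives $2S_p^{\min}=2\log k-4p\,k^{-2}+o(k^{-2})$. Comparing with the $p$-independent bound $S_p\le\log\rank\le\log(k^2-1)$ from Lemma~\ref{lem:rank-deficit} gives exactly $p<1/4$.

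Two further corrections. No ``large eigenvalue $d/(nk)$'' enters the rank-deficit output, and the partner is $\Phi_Q$ with $Q=I-P^T$ (so that $PQ^T=0$), not a complementary channel. The mechanism also needs single outputs with entropy within $O(k^{-2})$ of $\log k$. This holds for the locally normalized Choi projection with $t$ fixed. It fails for the Stinespring model with $d=\lambda nk$, which has outputs with an eigenvalue near $\lambda$.

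\textbf{High $p$.} In the regime you fix (Stinespring model, $d=\lambda nk$, $\lambda$ fixed, $k\to\infty$), the Bell witness fails for every $p\in(0,1)$. Your Bell bound equals $2\log k-\frac{p}{1-p}\log\frac1{1-\lambda}+o(1)$; the $\lambda^p$ term you emphasize contributes only $O(k^{-2(1-p)})$. But the free-compression edge gives a single output with top eigenvalue $\lambda+2\sqrt{\lambda(1-\lambda)/k}+O(k^{-1})$. By Schur concavity, $2S_p^{\min}(\Phi)\le 2\log k-\frac{2p}{1-p}\log\frac1{1-\lambda}+o(1)$, which is strictly smaller than the Bell bound.

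The value $3/4$ does not come from matching exponents. It comes from comparing \emph{coefficients} in a regime where both sides deviate from maximal entropy at the same order. In the paper ($t$ fixed and small, then $k\to\infty$), the Bell side is $2\log k-A_p(t)k^{-2}$ and the doubled single side is $2\log k-\frac{4p(1-t)}{t}k^{-2}$, both up to $o(k^{-2})$. Since $\frac{tA_p(t)}{4p(1-t)}\to\frac{1}{4(1-p)}$ as $t\downarrow0$, the threshold is $p>3/4$. The sharp constant $4$ of the one-channel body is precisely what your plan never computes.

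Your claim that the minimizer has shape $(a,b,\dots,b)$ rests on the Belinschi--Collins--Nechita analysis of $\ell^p$ norms ($p\ge1$). For $p<1$ you are minimizing a concave function over a convex body, and you give no argument.

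Finally, even a repaired version would prove a different statement. Your channel $\rho\mapsto\Tr_{\mathrm{env}}(P\rho P)$ belongs to the random-subspace ensemble. It is not a projection-induced channel in the sense of Definition~\ref{def:projection-induced}, whose Choi matrix is $(P_A^{-1/2}\otimes I)P(P_A^{-1/2}\otimes I)$. It also has a different limiting output body and a different Bell spectrum.
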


The new order range above the von Neumann point is
\(3/4<p<1\).  At the lower endpoint, Theorem~\ref{thm:main} replaces the
previously unspecified neighborhood of \(p=0\) by the explicit interval
\(0\leq p<1/4\).  Thus, with respect to the existence of additivity
violations in \(0<p<1\), only
\begin{equation}
    \frac14\leq p\leq\frac34
\end{equation}
remains unresolved.  Moreover, the asymptotic gap function used in our
proof extends continuously to \(p=1\) and remains positive there.  In
this precise sense, the von Neumann entropy is not a singular point of
the projection-induced mechanism.

\subsection{Overview of the methods}

Our counterexamples are random channels obtained from Haar-distributed
projections. Fix \(k\geq 2\) and \(t\in(k^{-2},1)\), and let \(P_n\) be a
Haar-distributed projection of rank \(d_n\) on
\(\mathbb C^n\otimes\mathbb C^k\), where $\frac{d_n}{nk}\longrightarrow t$ as $n$ goes to infinity. Writing \(P_{A,n}=\Tr_B P_n\), we locally normalize \(P_n\) to obtain
\begin{equation}
    J_n
    =
    (P_{A,n}^{-1/2}\otimes I_k)
    P_n
    (P_{A,n}^{-1/2}\otimes I_k).
    \label{eq:intro-normalized-choi}
\end{equation}
The condition \(t>k^{-2}\) ensures that \(P_{A,n}\) is almost surely
invertible for all sufficiently large \(n\). Moreover,
\(\Tr_B J_n=I_n\), so \(J_n\) is the Choi matrix of the
trace-preserving channel
\[
    \Phi_n(X)
    =
    \Tr_A\!\left[J_n(X^{T}\otimes I_k)\right].
\]

Our first step is to determine the deterministic large-\(n\) limit of the
output sets
\[
    \mathscr C_n:=\Phi_n(\mathcal D_n)\subseteq\mathcal D_k,
\]
where \(\mathcal D_m\) denotes the set of density operators on
\(\mathbb C^m\). Using strong asymptotic freeness and free-probabilistic
techniques developed in
\cite{CollinsMale2014,CollinsFukudaNechita2015,
ArizmendiNechitaVargas2016,CollinsNechita2016Review},
we prove that, almost surely,
\[
    \mathscr C_n\longrightarrow\mathscr K_{k,t}
\]
in Hausdorff distance, where 
\[
\mathscr K_{k,t} = \left\{\frac{X}{\Tr X}: 0\le  X \le I_k, \ \Tr(c_t(X)) \le \frac{1}{k} \right\},\quad c_t(u):= \left(\sqrt{t(1-u)}-\sqrt{u(1-t)}\right)^2.
\]

Consequently, for every fixed \(p\ge 0\),
\[
    \lim_{n\to\infty}S_p^{\min}(\Phi_n)
    =
    S_p^{\min}(\mathscr K_{k,t})
    \qquad\text{almost surely},
\]
where $ S_p^{\min}(\mathscr K_{k,t})
    :=
    \min_{\rho\in\mathscr K_{k,t}}S_p(\rho)$.
A subsequent large-\(k\) analysis, with \(p\) and \(t\) fixed, gives
\begin{equation}
    S_p^{\min}(\mathscr K_{k,t})
    =
    \log k-\frac{2p(1-t)}{t k^2}
    +o(k^{-2}).
    \label{eq:intro-single-asymptotic}
\end{equation}
The case \(p=0\), for which the entropy is not continuous under Hausdorff
convergence, is treated separately by a rank argument.

Although \(\mathscr K_{k,t}\) is closely related to the random-subspace
output body studied in \cite{Belinschi_2012,Belinschi_2016}, the local
normalization changes its finite-\(k\) geometry. An exact finite-dimensional
optimization improves the smallest output dimension for which the
conjugate-channel Bell-state criterion proves a von Neumann entropy
violation from \(183\) to \(182\).

For the high-\(p\) regime, we pair \(\Phi_n\) with its complex-conjugate
channel \(\overline{\Phi}_n\) and evaluate the product channel on a
maximally entangled input. Let
\[
    |\psi_m^+\rangle
    =
    \frac{1}{\sqrt m}\sum_{i=1}^m |i\rangle\otimes|i\rangle,
    \qquad
    \psi_m^+=|\psi_m^+\rangle\langle\psi_m^+|,\qquad \forall m\ge 1.
\]
The Bell-state output converges almost surely to the
isotropic state
\[
    Z_{k,t}
    =
    r_{k,t}\psi_k^+
    +(1-r_{k,t})\frac{I_{k^2}}{k^2},
    \qquad
    r_{k,t}
    =
    \frac{k^2(1-t)}
    {(k^2-1)^2t+1-t}.
\]
Thus \(Z_{k,t}\) has eigenvalues
\[
    \alpha_{k,t}
    =
    r_{k,t}+\frac{1-r_{k,t}}{k^2},
    \qquad
    \beta_{k,t}
    =
    \frac{1-r_{k,t}}{k^2},
\]
with respective multiplicities \(1\) and \(k^2-1\). This is the
Bell-state phenomenon for random quantum channels
\cite{Collins_2010,CollinsNechita2011}. Its entropy satisfies
\begin{equation}
    \lim_{n\to\infty}
    S_p\!\left(
        (\Phi_n\otimes\overline{\Phi}_n)(\psi_n^+)
    \right)
    =
    2\log k-\frac{A_p(t)}{k^2}+o(k^{-2}),
    \label{eq:intro-joint-asymptotic}
\end{equation}
where
\begin{equation}
    A_p(t)
    =\frac{t^{-p}-1-p(t^{-1}-1)}{p-1},
    \label{eq:intro-Apt}
\end{equation}
and at \(p=1\), $A_p(t)$ is defined by continuous extension.

Comparing \eqref{eq:intro-single-asymptotic} and
\eqref{eq:intro-joint-asymptotic} reduces the Bell-state test to
\begin{equation}
    A_p(t)>\frac{4p(1-t)}{t}.
    \label{eq:intro-threshold-condition}
\end{equation}
For \(0<p<1\),
\[
    \lim_{t\downarrow0}
    \frac{tA_p(t)}{4p(1-t)}
    =
    \frac{1}{4(1-p)}.
\]
Hence \eqref{eq:intro-threshold-condition} holds for sufficiently small
\(t\) whenever \(p>3/4\). The cases \(p=1\), \(p>1\) and $p = \infty$ follow from the
same small-\(t\) asymptotics. Thus the von Neumann point is not singular
for the present method and does not require a separate Hastings-type
argument.

For the low-\(p\) regime, a different correlated partner is needed. Set
\(t=1/2\) and define the transposed orthogonal-complement projection
\[
    Q_n:=I_{nk}-P_n^{T},
    \qquad
    Q_{A,n}:=\Tr_B Q_n=kI_n-P_{A,n}^{  T}.
\]
Let \(\Psi_n\) be the channel obtained from \(Q_n\) by the same local
normalization as in \eqref{eq:intro-normalized-choi}. Marginally, \(Q_n\)
is again a Haar-distributed projection with asymptotic rank ratio \(1/2\);
therefore both one-channel output sets converge to
\(\mathscr K_{k,1/2}\).

The key additional relation is the exact orthogonality
\[
    P_nQ_n^{T}=P_n(I_{nk}-P_n)=0.
\]
Following the argument in \cite{Cubitt_2008}, there exists an output state with rank deficit. Then for every \(p\geq0\),
\[
    S_p^{\min}(\Phi_n\otimes\Psi_n)
    \leq \log(k^2-1).
\]
On the other hand,
\begin{align}
    2S_p^{\min}(\mathscr K_{k,1/2})
        &=2\log k-\frac{4p}{k^2}+o(k^{-2}),
        \nonumber\\
    \log(k^2-1)
        &=2\log k-\frac{1}{k^2}+O(k^{-4}).
    \label{eq:intro-low-p-comparison}
\end{align}
Therefore
\[
    2S_p^{\min}(\mathscr K_{k,1/2})
    -\log(k^2-1)
    =
    \frac{1-4p}{k^2}+o(k^{-2}),
\]
which is positive for every \(0<p<1/4\), once \(k\) and then \(n\) are
chosen sufficiently large.

Finally, at \(p=0\) the relevant quantity is the minimum output rank.
For fixed \(k\geq3\) and sufficiently large \(n\), the one-channel
outputs are almost surely of full rank, whereas the filtered Bell output
has rank at most \(k^2-1\). Thus
\[
    S_0^{\min}(\Phi_n\otimes\Psi_n)
    \leq\log(k^2-1)
    <
    2\log k
    =
    S_0^{\min}(\Phi_n)+S_0^{\min}(\Psi_n).
\]

\subsection{Comparison with previous work}

The half-rank
projection, the transposed orthogonal-complement partner, the local
normalization, and the rank-deficient joint output are all already present
in Ref.~\cite{Cubitt_2008}. Our contribution in the low-\(p\) regime is therefore not a new projection
construction, but a quantitative asymptotic entropy analysis of that
construction.  By determining the limiting one-channel output body
\(\mathscr K_{k,1/2}\) and its large-\(k\) minimum output entropy, we replace
the previously unspecified interval of sufficiently small positive \(p\)
by the explicit range
\[
    0<p<\frac14.
\]
More generally, inverse-marginal normalization of a positive random matrix
is a standard procedure for generating a trace-preserving Choi matrix; a
systematic treatment of random-channel ensembles obtained in this way is
given in Ref.~\cite{KukulskiEtAl2021}. We remark that it seems hard to apply the construction in \cite{Belinschi_2016} in the low $p$ regime.

The use of a maximally entangled input for a channel and its complex
conjugate was analyzed systematically by Collins and Nechita.  They
established the random-channel Bell-state phenomenon and computed limiting
spectral and entropy statistics for product-conjugate outputs in several
asymptotic regimes~\cite{Collins_2010,CollinsNechita2011}.  For the
Haar--Stinespring ensemble, equivalently the Haar-random-subspace model,
Belinschi, Collins, and Nechita subsequently identified the deterministic
limiting set \(K_{k,t}\) of possible output spectra
\cite{Belinschi_2012} and optimized the relevant norms and entropies over
this set~\cite{Belinschi_2016}.  In the von Neumann case, they proved that
the conjugate-channel/Bell-input comparison yields a violation at output
dimension \(k=183\), while almost surely the same comparison cannot yield
a violation for \(k\leq182\) in that ensemble.  They also obtained, in a
suitable asymptotic regime, entropy gaps arbitrarily close to \(\log 2\).
Convergence of channel output sets was later placed in a general
compact-convex framework by Collins, Fukuda, and
Nechita~\cite{CollinsFukudaNechita2015}.

The ensemble considered here is different: we begin with a Haar-random
projection and use its local normalization as a Choi matrix.  This
normalization changes both the limiting one-channel output body and the
limiting isotropic Bell-output spectrum.  For the present ensemble,
numerical evaluation of the resulting asymptotic Bell-input criterion
gives a von Neumann-entropy violation already at \(k=182\), improving the 
result in \cite{Belinschi_2016}.

For broader accounts of the random-matrix and
asymptotic-convex-geometric methods surrounding these results, see
Refs.~\cite{CollinsNechita2016Review,aubrun2017alice}.

For completeness, a 2010 preprint by Yu and Ying announced nonadditivity
for $p\in(0,p_0)\cup(1-p_0,1), p_0\simeq0.2855$. The preprint was subsequently withdrawn because of what the authors described as a crucial error, so the announced ranges are not regarded as
established~\cite{YuYingWithdrawn}.  Cubitt et al.~proved nonadditivity at
\(p=0\), and hence in some neighborhood of \(0\), but did not give a
rigorous numerical endpoint; the range extending to approximately
\(p=0.11\) for their explicit example was supported
numerically~\cite{Cubitt_2008}.  Likewise, continuity of a
finite-dimensional counterexample at \(p=1\) gives a channel-dependent
neighborhood immediately below \(1\), but no uniform numerical
endpoint~\cite{Hastings_2009}.  To the best of our knowledge,
Theorem~\ref{thm:main} is the first rigorous result giving explicit
uniform endpoints for nonadditivity intervals adjacent to both ends of
\(0<p<1\), namely
\[
    0<p<\frac14
    \qquad\text{and}\qquad
    \frac34<p<1.
\]

The remainder of the paper is organized as follows.
Section~\ref{sec:prelim} introduces projection-induced channels and the
random-compression estimate.  Section~\ref{sec:output-space} determines
the limiting one-channel output body.  Section~\ref{sec:bell-output}
computes the product-conjugate Bell output, and
Section~\ref{sec:main-proof} proves Theorem~\ref{thm:main}.  The appendices
collect the basics of free probability theory and the asymptotic entropy 
calculations.

\subsection*{Acknowledgement}
We thank H. Derksen and S. Szarek for helpful discussions.

\section{Preliminaries}\label{sec:prelim}
All Hilbert spaces are finite-dimensional.  We fix
\(\mc H_A=\mb C^n\) and \(\mc H_B=\mb C^k\), with standard bases
\(\{|a\rangle_A\}_{a=1}^n\) and
\(\{|i\rangle_B\}_{i=1}^k\), and write
\(|ai\rangle=|a\rangle_A\otimes|i\rangle_B\).
Transposition and entrywise complex conjugation are always taken in these
bases.  We write \(\mc L(A,B)\) for the linear maps
\(\mc H_A\to\mc H_B\), \(\mc B(A)=\mc L(A,A)\), and
\[
   \mc D(A) = \mc D_n=\{\rho\in\mc B(A):\rho\geq0,\ \Tr\rho=1\}.
\]
We distinguish the unnormalized Bell vector from the normalized Bell state:
\[
 |\Omega_A\rangle=\sum_{a=1}^n|a,a\rangle,
 \qquad
 |\psi_A^+\rangle=\frac1{\sqrt n}|\Omega_A\rangle,
 \qquad
 \psi_A^+=|\psi_A^+\rangle\langle\psi_A^+|.
\]

The Choi matrix of a linear map
\(\Phi:\mc B(A)\to\mc B(B)\) is
\begin{align*}
 J_\Phi
 &=(\mathrm{id}_A\otimes\Phi)
   (|\Omega_A\rangle\langle\Omega_A|) =\sum_{a,b=1}^n|a\rangle\langle b|
   \otimes\Phi(|a\rangle\langle b|)
 =n(\mathrm{id}_A\otimes\Phi)(\psi_A^+).
\end{align*}
Conversely,
\[
 \Phi(X)=\Tr_A[J_\Phi(X^T\otimes I_B)].
\]
The map \(\Phi\) is a quantum channel if and only if
\(J_\Phi\geq0\) and \(\Tr_BJ_\Phi=I_A\). The conjugate channel is
\(\overline\Phi(X)=\overline{\Phi(\overline X)}\), so that
\(J_{\overline\Phi}=\overline{J_\Phi}\).

For \(0<p<\infty\), \(p\neq1\), the R\'enyi entropy is
\[
 S_p(\rho)=\frac{1}{1-p}\log\Tr(\rho^p).
\]
We use the continuous and endpoint conventions
\[
 S_0(\rho)=\log\rank\rho,
 \qquad
 S_1(\rho)=-\Tr(\rho\log\rho),
 \qquad
 S_\infty(\rho)=-\log\lambda_{\max}(\rho),
\]
and define
\[
 S_p^{\min}(\Phi)
 =\min_{\rho\in\mc D(A)}S_p(\Phi(\rho)).
\]
All logarithms are natural.

\subsection{Projection-induced channels}
Let \(R_{AB}\in \mathcal B(\mathcal H_A\otimes \mathcal H_B)\) be a positive semidefinite operator, and assume that
\[
    R_A:=\operatorname{Tr}_B R_{AB}
\]
is strictly positive on \(\mathcal H_A\). Define
\[
    J_R
    :=
    (R_A^{-1/2}\otimes I_B)
    R_{AB}
    (R_A^{-1/2}\otimes I_B).
\]
Then $\Tr_B J_R=R_A^{-1/2}R_A R_A^{-1/2}=I_A$, thus \(J_R\) is the Choi matrix of a quantum channel
\begin{equation}
        \Phi_R:\mathcal B(\mathcal H_A)\to \mathcal B(\mathcal H_B),\quad \Phi_R(X)
    :=\operatorname{Tr}_A\left[J_R(X^T\otimes I_B)\right].
\end{equation}
We call \(R_{AB}\) a generalized Choi operator for \(\Phi_R\). Notice that the construction is invariant under positive rescaling: $J_{cR}=J_R,
    \quad c>0$. 
\begin{definition}[Projection-induced channels]\label{def:projection-induced}
    We call $\Phi:\mc B(\mc H_A)\to\mc B(\mc H_B)$ a
projection-induced channel if $\Phi=\Phi_P$, where $P=P_{AB}$ is a projection on $\mc H_A\otimes\mc H_B$ such that
\[
    P_A:=\Tr_B P_{AB}>0.
\]
We write the transpose-orthogonal projection-induced channel as
\begin{equation}\label{eq:orthogonal-projection-channel}
    \Phi_P^\perp:=\Phi_Q.
\end{equation}
where the projection $Q$ is given by 
$$Q = Q_{AB}:=I_{AB}-P^T_{AB},\quad Q_A>0.$$
\end{definition}

The following result is given in \cite{Cubitt_2008}:
\begin{lemma}[Rank deficit]\label{lem:rank-deficit}
    Let $P=P_{AB}$ be a projection on
    \[
        \mc H_A\otimes\mc H_B
        \cong \mb C^n\otimes\mb C^k,
    \]
    and set $Q:=I_{AB}-P_{AB}^T$. Assume that $P_A:=\Tr_B P>0, \ Q_A:=\Tr_B Q>0$,
    so that both $\Phi_P$ and
    $\Phi_P^\perp=\Phi_Q$ are well-defined. Then there exists a pure
    state $\omega\in\mc D(\mc H_A\otimes\mc H_A)$ such that
    \[
        \rank\bigl((\Phi_P\otimes\Phi_P^\perp)(\omega)\bigr)
        \le k^2-1.
    \]
    In particular, the conclusion applies to the rank-$nk/2$
    projections considered below whenever $P_A,Q_A>0$.
\end{lemma}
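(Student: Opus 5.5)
The plan is to take the input $\omega$ to be a suitably \emph{filtered} maximally entangled state on $\mc H_A\otimes\mc H_{A'}$, with $A'\cong A$, and to show that the unnormalized Bell vector $|\Omega_B\rangle\in\mc H_B\otimes\mc H_{B'}$ lies in the kernel of the output. This gives rank at most $k^2-1$. The local normalizations $P_A^{-1/2}$ and $Q_A^{-1/2}$ in the Choi matrices are exactly what the filter will cancel.

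\textbf{Step 1 (Choi representation of the product output).} Write $M:=P_A^{-1/2}\otimes I_B$ and $N:=Q_A^{-1/2}\otimes I_B$. Then $J_P=MPM$ and $J_Q=NQN$. For a pure input $\omega$ put $\omega^T=|v\rangle\langle v|$ with $|v\rangle\in\mc H_A\otimes\mc H_{A'}$. The Choi formula applied to $\Phi_P\otimes\Phi_Q$ gives
\[
(\Phi_P\otimes\Phi_Q)(\omega)=\Tr_{AA'}\bigl[(J_P\otimes J_Q)(|v\rangle\langle v|\otimes I_{BB'})\bigr],
\]
after reordering the tensor factors as $(AB)(A'B')$. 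Hence
\[
\langle\Omega_B|(\Phi_P\otimes\Phi_Q)(\omega)|\Omega_B\rangle
=\bigl\langle w\bigr|\,P\otimes Q\,\bigl|w\bigr\rangle,
\qquad |w\rangle:=(M\otimes N)\bigl(|v\rangle\otimes|\Omega_B\rangle\bigr).
\]

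\textbf{Step 2 (choice of filter).} Choose
\[
|v\rangle:=\bigl(P_A^{1/2}\otimes Q_A^{1/2}\bigr)|\Omega_A\rangle\Big/\bigl\|\bigl(P_A^{1/2}\otimes Q_A^{1/2}\bigr)|\Omega_A\rangle\bigr\|.
\]
This vector is nonzero because $P_A,Q_A>0$. Take $\omega:=\overline{|v\rangle\langle v|}$, which is a pure state with $\omega^T=|v\rangle\langle v|$. With this choice, $|w\rangle$ is a positive multiple of $|\Omega_A\rangle\otimes|\Omega_B\rangle$, which is the unnormalized maximally entangled vector $|\Omega_{AB}\rangle$ on $(AB)\otimes(A'B')$.

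\textbf{Step 3 (orthogonality).} Apply the identity $\langle\Omega_{AB}|X\otimes Y|\Omega_{AB}\rangle=\Tr(XY^T)$ to $X=P$ and $Y=Q$. Since $Q^T=I-P$,
\[
\langle\Omega_B|(\Phi_P\otimes\Phi_Q)(\omega)|\Omega_B\rangle\propto \Tr(PQ^T)=\Tr\bigl(P(I-P)\bigr)=0.
\]
The output is positive semidefinite, so a vanishing expectation forces $|\Omega_B\rangle$ into its kernel. Therefore the rank is at most $k^2-1$.

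The only delicate point is Step 1: keeping track of the transposes and of the ordering of the tensor factors, so that the filter cancels $M\otimes N$ exactly rather than up to a transpose. If a transpose lands on the wrong side, I will fix it by replacing $P_A^{1/2}$ with $(P_A^{1/2})^T$ in the filter. The rest of the argument is unchanged, since only invertibility of the filter is used.
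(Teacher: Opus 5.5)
Your proof is correct and uses the same mechanism the paper relies on. The paper cites Cubitt et al.\ for this lemma, and its introduction describes a ``filtered Bell output'' together with the orthogonality $PQ^T=P(I-P)=0$. In your version, the input $(P_A^{1/2}\otimes Q_A^{1/2})|\Omega_A\rangle$ cancels the local normalizations, and then $\langle\Omega_{AB}|P\otimes Q|\Omega_{AB}\rangle=\Tr(PQ^T)=0$ forces $|\Omega_B\rangle$ into the kernel of the output.

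Your Step~1 bookkeeping is already right as written. Choosing $\omega=\overline{|v\rangle\langle v|}$ absorbs the transpose, so the fallback of replacing $P_A^{1/2}$ by its transpose is not needed.
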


\subsection{Haar random bipartite projections}
We will study Haar random projections on \(\mathcal H_A\otimes\mathcal H_B \cong \mb C^n \otimes \mb C^k\). We say $P\in \mc B(\mc H_A \otimes \mc H_B)$ is a Haar random projection with rank $d\ge 1$ if \[
P = UP_0 U^\dagger
\] 
where $U$ is a Haar random unitary acting on $\mc H_A \otimes \mc H_B$ and $P_0$ is a fixed projection with rank $d$. 

Recall that for a bipartite projection $P_{AB}$, to induce a quantum channel via Definition~\ref{def:projection-induced}, we need $P_A > 0$. The following lemma ensures the strict positivity for Haar random projection $P_{AB}$:
\begin{lemma}[Full local support of a Haar-random subspace]
    \label{lem:haar-projection-partial-trace}
    Let \(n,k,d\in\mb N\) with \(1\leq d\leq nk\), and let
    \(P=P_{AB}\) be a Haar-random projection of rank \(d\) on
    \(\mb C^n\otimes\mb C^k\). Then
    \[
        \rank(\Tr_B P)=\min\{n,kd\}
        \qquad\text{almost surely}.
    \]
    Consequently,
    \[
        \Tr_B P>0\quad\text{almost surely}
        \qquad\Longleftrightarrow\qquad
        n\leq kd.
    \]
\end{lemma}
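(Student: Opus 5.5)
The plan is to reduce the claim to a rank statement about a fixed generic configuration and then use that the relevant rank-deficient set is a proper algebraic subvariety of the Grassmannian. Write \(P=VV^\dagger\), where \(V:\mb C^d\to\mb C^n\otimes\mb C^k\) is an isometry with Haar-distributed range. Reshape each column \(v_j\in\mb C^n\otimes\mb C^k\) as an \(n\times k\) matrix \(M_j\), so that \(v_j=\sum_i M_j|i\rangle\otimes|i\rangle\) in the obvious sense. Then
\[
    \Tr_B P=\sum_{j=1}^d M_jM_j^\dagger ,
\]
and its range is the span of the ranges of the \(M_j\). The same holds if \(V\) is replaced by any basis of the range of \(P\), since this changes \(P_A\) only by congruence. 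Hence
\[
    \rank(\Tr_B P)=\dim\sum_j \operatorname{ran} M_j\le\min\{n,kd\},
\]
because each \(M_j\) has rank at most \(k\) and the total dimension is at most \(n\). This gives the upper bound deterministically.

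For the lower bound, first exhibit one explicit subspace achieving \(\min\{n,kd\}\). Set \(m=\min\{n,kd\}\). Choose \(d\) orthonormal vectors in \(\mb C^n\otimes\mb C^k\) of the form
\[
    v_j=\frac{1}{\sqrt{r_j}}\sum_{i=1}^{r_j}|a_{j,i}\rangle\otimes|i\rangle ,
\]
where the blocks \(\{a_{j,i}\}_i\) are disjoint sets of basis indices of \(\mb C^n\) that together cover \(\{1,\dots,m\}\), with each block of size at most \(k\). The number of blocks needed is \(\lceil m/k\rceil\le d\). Any leftover vectors, needed when \(d>\lceil m/k\rceil\), are chosen orthonormal to these, which is possible since \(d\le nk\). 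For this choice \(\sum_j M_jM_j^\dagger\) dominates a positive multiple of the projection onto \(\operatorname{span}\{|1\rangle,\dots,|m\rangle\}\), so it has rank at least \(m\).

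Next, the set of rank-\(d\) projections \(P\) with \(\rank(\Tr_B P)<m\) is cut out by the vanishing of all \(m\times m\) minors of \(\Tr_B(UP_0U^\dagger)\). These minors are real-analytic (indeed polynomial in the real and imaginary parts of the entries of \(U\)) functions on the connected manifold \(U(nk)\). By the explicit example above, they are not all identically zero. The zero set of a nonzero real-analytic function on a connected real-analytic manifold has Haar measure zero. This is the main technical point, and I would either cite the standard fact or note that it is a proper real-algebraic subset of the irreducible variety \(U(nk)\). Hence \(\rank(\Tr_B P)\ge m\) almost surely, and combined with the upper bound we get equality almost surely.

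Finally, \(\Tr_B P>0\) means \(\rank=n\), which by the equality holds almost surely if and only if \(\min\{n,kd\}=n\), i.e.\ \(n\le kd\). Conversely, if \(n>kd\), the deterministic upper bound shows that \(\Tr_B P\) is never invertible. I expect no real obstacle beyond carefully arranging the explicit witness when \(d\) exceeds the number of blocks and justifying the measure-zero property of analytic zero sets.
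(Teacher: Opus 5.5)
Your proof is correct, but it takes a different route from the paper's. The paper uses the Gaussian realization of Haar measure on the Grassmannian. The range of $P$ is distributed as the column span of an $nk\times d$ Ginibre matrix $G$, so $\Tr_B P$ has the same support as $\Tr_B(GG^\dagger)$. Up to normalization, this is the reduced state on $\mb C^n$ of a Haar-random vector in $\mb C^n\otimes\mb C^{kd}$, and the paper quotes the generic rank $\min\{n,kd\}$ of such induced states from Zyczkowski--Sommers.

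You instead work directly with the definition $P=UP_0U^\dagger$. You get a deterministic upper bound by reshaping the columns, exhibit an explicit block witness attaining $\min\{n,kd\}$, and use the fact that a real-analytic function on the connected group $U(nk)$ that is not identically zero vanishes only on a Haar-null set. Your count $\lceil m/k\rceil\le d$ and the orthonormal completion using $d\le nk$ are fine.

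Both arguments rest on the same principle (a nonzero polynomial vanishes on a null set); they differ in parametrization. In the paper's Gaussian parametrization the entries are unconstrained. Reshaping $G$ into an $n\times kd$ matrix $\widetilde G$ gives $\Tr_B(GG^\dagger)=\widetilde G\widetilde G^\dagger$, so genericity reduces to full rank of a rectangular Gaussian matrix and no witness is needed. The cost is two external citations. Your unitary parametrization carries orthonormality constraints, which is why you must exhibit a point where some $m\times m$ minor is nonzero. In exchange your argument is self-contained, matches the paper's definition of a Haar projection literally, and makes the deterministic bound $\rank(\Tr_B P)\le\min\{n,kd\}$ explicit. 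That bound gives the strong form of the ``only if'' direction: for $n>kd$ the marginal is never invertible.

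One small inaccuracy, in a remark you never actually use: replacing the orthonormal columns $V$ by a general basis $VR$ does not change $\Tr_B P$ by a congruence. What is true is that $\Tr_B(VRR^\dagger V^\dagger)$ and $\Tr_B(VV^\dagger)$ are each bounded above and below by positive multiples of the other, because partial trace is a positive map. That is why they have the same support. The paper's passage from $G$ to $P$ implicitly uses exactly this fact.
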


\begin{proof}
By the Gaussian realization of Haar measure on the Grassmannian, see for example \cite[Exercise~B.14]{aubrun2017alice} and \cite[Section~5]{Mezzadri2007}, the
support of \(\Tr_B P\) agrees with that of the reduced state of a
Haar-random vector in
\(\mb C^n\otimes\mb C^{kd}\). The generic-rank result for induced random
states therefore gives
\[
    \rank(\Tr_B P)=\min\{n,kd\}
\]
almost surely; see
\cite[Sec.~III.A, paragraph following Eq.~(3.1)]
{ZyczkowskiSommers2001}.
\end{proof}

The following result of asymptotics of Haar random matrix is crucial. It is derived using the techniques developped in \cite{CollinsMale2014, ArizmendiNechitaVargas2016, Nechita2018}. 
\begin{lemma}[Random compression estimate]
    \label{lem:random-matrix-input}
    Let $P_n=P_{AB,n}$ be a Haar random projection on $\mc H_A\otimes\mc H_B
    \cong\mb C^n\otimes\mb C^k$ with rank $d_n$, where
    \[
        \frac{d_n}{kn}\longrightarrow t\in(0,1).
    \]
    In the standard tensor product basis, write
    \[
        P_n
        =
        \sum_{i,j=1}^k
        P_{ij}^{(n)}\otimes|i\rangle\langle j|,\quad P_{ij}^{(n)} \in \mb M_n.
    \]
    Then, on an event of probability one, simultaneously for every
    $a=(a_1,\ldots,a_k)\in\mb R^k$,
    \begin{equation}\label{eq:random-compression}
        \lim_{n\to\infty}
        \lambda_{\max}\left(
            \sum_{i=1}^k a_iP_{ii}^{(n)}
        \right)
        =
        \max_{u\in\mathscr D_{k,t}}
        \sum_{i=1}^k a_i u_i,
    \end{equation}
    where
    \begin{equation}\label{eq:set-D}
        \mathscr D_{k,t}
        :=
        \left\{
            u\in[0,1]^k:
            \sum_{i=1}^k
            \left(
                \sqrt{t(1-u_i)}
                -
                \sqrt{(1-t)u_i}
            \right)^2
            \le\frac1k
        \right\}.
    \end{equation}
\end{lemma}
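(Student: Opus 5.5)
The plan is to identify $\sum_i a_iP_{ii}^{(n)}$ as a compression of a projection that is asymptotically free from $P_n$, and to read off its extreme eigenvalue from free compression theory together with strong convergence. Set $D_a:=I_n\otimes\diag(a_1,\dots,a_k)$. Then
\[
    \sum_i a_iP_{ii}^{(n)}=\Tr_B\bigl[P_n D_a\bigr]
    =\sum_i (I_n\otimes\langle i|)P_n(I_n\otimes|i\rangle)\,a_i .
\]
Each block $P_{ii}^{(n)}=E_iP_nE_i$ (with $E_i=I_n\otimes|i\rangle\langle i|$) is the compression of the Haar projection $P_n$ to a deterministic subspace of relative dimension $1/k$. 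The family $\{E_i\}$ is deterministic with limit distributions of trace $1/k$, and $P_n=UP_0U^\dagger$ is Haar rotated. By Collins--Male strong asymptotic freeness, $(P_n,E_1,\dots,E_k)$ converges strongly to $(p,e_1,\dots,e_k)$ in a $C^*$-probability space $(\mc A,\tau)$. Here $p$ is a projection of trace $t$, the $e_i$ are orthogonal projections of trace $1/k$ summing to $1$, and $p$ is free from $\{e_i\}$. Since $\sum_i a_iP_{ii}^{(n)}$, viewed as an $n\times n$ matrix, has the same nonzero spectrum as $\sum_i a_iE_iP_nE_i$ on $\mb C^{nk}$ (up to extra zeros, which can be handled by testing both $\pm$ directions or by shifting $a$ by a multiple of $\mathbf 1$, using $\sum_iP_{ii}=P_A$), strong convergence gives almost surely
\[
    \lambda_{\max}\Bigl(\sum_i a_iP_{ii}^{(n)}\Bigr)\to \max\operatorname{spec}\Bigl(\sum_i a_i\, e_ipe_i\Big|_{\text{compressed}}\Bigr),
\]
for each fixed polynomial, i.e. for each fixed $a$.

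Next comes the identification of the limit. In the compressed algebra $e_i\mc A e_i$ with normalized trace $k\tau$, the elements $x_i:=e_ipe_i$ are free compressions of $p$. Their norms are governed by the support of free multiplicative convolution of Bernoulli laws, so $\|x_i\|=\varphi(t,1/k)$, given by the classical formula $(\sqrt{t(1-1/k)}+\sqrt{(1-t)/k})^2$. The key task is to show
\[
    \max\operatorname{spec}\Bigl(\sum_i a_i x_i\Bigr)=\max_{u\in\mathscr D_{k,t}}\langle a,u\rangle .
\]
Following Belinschi--Collins--Nechita and Arizmendi--Nechita--Vargas, I would proceed as follows. A vector $u$ is attained as $(\omega(x_i))_i$ for a state $\omega$ (in the limit, "unit vectors") exactly when the $k$-tuple of compressions admits it. By the norm-of-projection-compression formula applied to $p$ and the projection $q_u:=\sum_i$ (rank-one-type directions weighted appropriately), feasibility is equivalent to $\tau$-norm constraints. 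These reduce to the inequality defining $\mathscr D_{k,t}$, and the function $(\sqrt{t(1-u)}-\sqrt{(1-t)u})^2$ is the inverse of the compression-norm relation. Concretely, $u\in\mathscr D_{k,t}$ iff there is a norm-one vector $\xi=\sum_i\xi_i$ with $\xi_i\in e_i$ and $\langle\xi_i,p\xi_i\rangle/\|\xi_i\|^2=u_i$ after renormalizing, and the max of a linear functional over the numerical range of a self-adjoint element equals its top eigenvalue. Convexity of the set of such $u$ (a joint numerical range in infinite dimensions) gives the support-function identity.

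Uniformity over all $a\in\mb R^k$ follows from a countable dense set and Lipschitz continuity. Both sides are $\max_i\|P_{ii}\|\cdot\|a\|_1$-Lipschitz in $a$ (norms $\le1$), so almost sure convergence on $\mb Q^k$ extends to a single probability-one event for all $a$.

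The main obstacle is the exact identification of the joint numerical range of $(e_ipe_i)_i$ with $\mathscr D_{k,t}$. I would try to reduce it to the rank-one (single functional) case: for fixed $a$, compute $\max\operatorname{spec}(\sum a_ie_ipe_i)=\max\operatorname{spec}(p(\sum a_ie_i)p)$ via the subordination/$R$-transform of the free product of a Bernoulli($t$) and the discrete law $\frac1k\sum\delta_{a_i}$. I would then verify by Lagrange duality that the edge of the support equals the support function of $\mathscr D_{k,t}$.
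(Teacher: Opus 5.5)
There is a genuine gap at your first reduction, and the rest of the proposal inherits it. With $E_i=I_n\otimes|i\rangle\langle i|$ you have $E_iP_nE_i=P^{(n)}_{ii}\otimes|i\rangle\langle i|$. So $\sum_ia_iE_iP_nE_i=\bigoplus_ia_iP^{(n)}_{ii}$ is block diagonal. Its spectrum is the union of the spectra of the individual $a_iP^{(n)}_{ii}$, not the spectrum of their sum $\Tr_B[P_nD_a]$; the partial trace is not the pinching.

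Take $a=\mathbf 1$. Then $\sum_iP^{(n)}_{ii}=P_{A,n}$ has $\lambda_{\max}\geq n^{-1}\Tr P_{A,n}=d_n/n\to kt$, as it must since $(t,\dots,t)\in\mathscr D_{k,t}$. But $\|\sum_iE_iP_nE_i\|=\max_i\|P^{(n)}_{ii}\|\leq 1$, so for $t>1/k$ the two cannot agree. Hence the identity you single out as the key task, $\max\operatorname{spec}(\sum_ia_ie_ipe_i)=\max_{u\in\mathscr D_{k,t}}\langle a,u\rangle$, is false. The fallback through $p(\sum_ia_ie_i)p$ and $b_t\boxtimes\frac1k\sum_i\delta_{a_i}$ fails too: at $a=\mathbf 1$ that law is $b_t$, with edge $1$.

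Your numerical-range picture has the same defect. Vectors $\xi_i$ chosen independently in the corners $e_i$ let the $u_i$ vary independently, which gives a box. The true constraint comes from one vector $\xi\in\mb C^n$ shared by all blocks, $u_i=\langle\xi,P^{(n)}_{ii}\xi\rangle$.

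The missing ingredient is the off-diagonal matrix units, which move all diagonal blocks into a common corner.
\begin{itemize}
\item Apply Collins--Male to $P_n$ together with all $E_{ij}=I_n\otimes|i\rangle\langle j|$. With $S_n(a):=\sum_ia_iP^{(n)}_{ii}$, one gets $S_n(a)\otimes|1\rangle\langle1|=\sum_ia_iE_{1i}P_nE_{i1}$, which converges strongly to $\sum_ia_ip_{ii}$, where $p_{ii}=e_{1i}pe_{i1}$, in $(e_{11}\mc Ae_{11},k\tau)$.
\item Because $p$ is free from $M_k(\mb C)$, the diagonal entries $p_{11},\dots,p_{kk}$ are free (compression by matrix units), each with law $D_{1/k}(b_t^{\boxplus k})$. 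The limit law is therefore the free additive convolution $\mu_{a,t}=\boxplus_i(D_{a_i/k}b_t)^{\boxplus k}$, with $R$-transform $\sum_ia_iR_t(a_iw/k)$.
\item This is exactly what the paper obtains from the strong block-modification theorem applied to $\varphi_a(X)=\Tr(AX)I_k$, since $(\mathrm{id}_n\otimes\varphi_a)(P_n)=S_n(a)\otimes I_k$.
\item The edge is then $\inf_{w>0}K_a(w)$. It is matched with the support function of $\mathscr D_{k,t}$ via the Legendre pair $g_t(y)=yR_t(y)=\sup_{0\le u\le 1}\{yu-c_t(u)\}$ and Slater duality (Lemma~\ref{lem:bernoulli-edge-duality}).
\end{itemize}
Your closing plan, computing an edge by $R$-transforms and matching it by Lagrange duality, has the right shape. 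It must be run on $\mu_{a,t}$, however, not on a pinching or a multiplicative convolution. Your Lipschitz-plus-density argument for uniformity in $a$ is fine.
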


\begin{proof}
For \(a\in\mb R^k\), set
\[
    A=\operatorname{diag}(a_1,\ldots,a_k),
    \qquad
    S_n(a)=\sum_{i=1}^k a_iP_{ii}^{(n)},
\]
and consider the Hermiticity-preserving map
\[
    \varphi_a:\mb M_k\longrightarrow \mb M_k,
    \qquad
    \varphi_a(X)=\Tr(AX)I_k.
\]
The Choi matrix of \(\varphi_a\) is \(I_k\otimes A\), whose spectral
projections have partial traces proportional to \(I_k\). Thus
\(\varphi_a\) satisfies the unitarity condition of
\cite[Definition~5.1]{Nechita2018}.

Let \(b_t=(1-t)\delta_0+t\delta_1\) be the Bernoulli measure with parameter $t$. Since \(P_n\) converges strongly
to \(b_t\)\footnote{see Appendix~\ref{app:sub-basics} for related definitions}, the strong block-modification theorem
\cite[Theorem~5.2]{Nechita2018}, together with
\cite[Theorem~5.1]{ArizmendiNechitaVargas2016}, gives
\[
(\operatorname{id}_n\otimes\varphi_a)(P_n)
    =S_n(a)\otimes I_k
    \longrightarrow
    \mu_{a,t}
    :=
    \boxplus_{i=1}^k
    \left(D_{a_i/k}b_t\right)^{\boxplus k}
\]
strongly, almost surely. Consequently, 
\[
    \lambda_{\max}(S_n(a))
    \longrightarrow
    \max\operatorname{supp}(\mu_{a,t}).
\]
By Lemma~\ref{lem:bernoulli-edge-duality},
\[
    \max\operatorname{supp}(\mu_{a,t})
    =
    \max_{u\in\mathscr D_{k,t}}
    \sum_{i=1}^k a_i u_i,
\]
which proves the assertion for each fixed \(a\).

Finally, both sides are \(1\)-Lipschitz with respect to the
\(\ell^1\)-norm:
\[
    \left|
    \lambda_{\max}(S_n(a))-\lambda_{\max}(S_n(b))
    \right|
    \leq \sum_{i=1}^k|a_i-b_i|,
\]
and the same estimate holds for the support function of
\(\mathscr D_{k,t}\subset[0,1]^k\). Intersecting the probability-one
events over \(a\in\mb Q^k\) and using density proves the convergence
simultaneously for every \(a\in\mb R^k\).
\end{proof}
\section{Output spaces of Haar-random projection-induced channels}
\label{sec:output-space}

Fix an integer $k\geq2$ and let $n\to\infty$.  For each $n$, let
$P_n=P_{AB,n}$ be an independent Haar-distributed projection of rank $d_n$ on
$\mb C^n\otimes\mb C^k$, and set $P_{A,n}=\Tr_B P_n$.  Assume that
\begin{equation}\label{eq:projection-density}
    \frac{d_n}{nk}\longrightarrow t\in\left(\frac1{k^2},1\right).
\end{equation}
Since $kd_n\geq n$ for all sufficiently large $n$,
Lemma~\ref{lem:haar-projection-partial-trace} implies that $P_{A,n}>0$
almost surely for all sufficiently large $n$.  We then consider the
channel
\begin{equation}\label{eq:def-channel}
    \Phi_n(\rho)
    :=\Tr_A\!\left[
        P_n\left(
            P_{A,n}^{-1/2}\rho^T P_{A,n}^{-1/2}\otimes I_k
        \right)
    \right]
\end{equation}
and its output state space
\[
    \mathscr C_n:=\Phi_n(\mc D_n)\subseteq\mc D_k.
\]
Recall the definition of $\mathscr D_{k,t}$ in \eqref{eq:set-D}, define
\begin{align}
    \Lambda_{k,t}
    &:={\left\{
        \frac{u}{\sum_i u_i}:u\in\mathscr D_{k,t}
    \right\}},                                                   \label{eq:def-Lambda-kt}\\
    \mathscr K_{k,t}
    &:={\left\{
        U\diag(\lambda)U^*: \lambda\in\Lambda_{k,t},\
        \ U\in\mathcal U(k)
    \right\}}.                                                   \label{eq:def-K}
\end{align}
Equivalently, with $c_t(u):= \left(\sqrt{t(1-u)}-\sqrt{u(1-t)}\right)^2$ applied by functional calculus,
\begin{equation}\label{eq:operator-form-K}
    \mathscr K_{k,t}
    =\left\{
        \frac{X}{\Tr X}:
        0\leq X\leq I_k,\quad \Tr c_t(X)\leq\frac1k
      \right\}.
\end{equation}
Via standard analysis, it is straightforward to show that the sets $\Lambda_{k,t}\subseteq\mb R^k$ and
$\mathscr K_{k,t}\subseteq\mc D_k$ are compact and convex.
We use the Hausdorff distance induced by the trace norm:
\begin{equation}\label{eq:trace-Hausdorff}
 d_H^{(1)}(C,K):=\max\left\{
 \sup_{\rho\in C}\inf_{\sigma\in K}\|\rho-\sigma\|_1,
 \sup_{\sigma\in K}\inf_{\rho\in C}\|\rho-\sigma\|_1
 \right\}.
\end{equation}

\begin{theorem}\label{thm:output-space-limit}
Under assumption \eqref{eq:projection-density}, almost surely,
\[
    d_H^{(1)}(\mathscr C_n,\mathscr K_{k,t})\longrightarrow0.
\]
\end{theorem}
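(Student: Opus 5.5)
The plan is to compare support functions. Both $\mathscr C_n$ and $\mathscr K_{k,t}$ are compact convex subsets of $\mc D_k$, and the trace-norm Hausdorff distance between compact convex sets equals
\[
\sup_{\|H\|_\infty\le1,\ H=H^*}|h_{\mathscr C_n}(H)-h_{\mathscr K_{k,t}}(H)|,
\qquad h_C(H)=\max_{\rho\in C}\Tr(H\rho).
\]
So it suffices to prove that, almost surely, $h_{\mathscr C_n}(H)\to h_{\mathscr K_{k,t}}(H)$ uniformly over Hermitian $H$ in the unit ball.

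\textbf{Step 1: reformulate the channel support function.} Substitute $\sigma=P_{A,n}^{-1/2}\rho^TP_{A,n}^{-1/2}$. As $\rho$ ranges over $\mc D_n$, $\sigma$ ranges over the positive operators with $\Tr(\sigma P_{A,n})=1$. This gives
\[
h_{\mathscr C_n}(H)=\max_{\sigma\ge0}\frac{\Tr\bigl(\sigma\,\Tr_B[P_n(I\otimes H)]\bigr)}{\Tr(\sigma P_{A,n})}.
\]
Because $P_{A,n}>0$, this maximum is the unique root $s=s_n(H)$ of
\[
g_n(s):=\lambda_{\max}\bigl(\Tr_B[P_n(I\otimes(H-sI))]\bigr)=0 .
\]
Write $H=V\diag(a)V^*$ and set $P_n^V=(I\otimes V^*)P_n(I\otimes V)$, which is again Haar distributed. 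With the notation of Lemma~\ref{lem:random-matrix-input} applied to $P_n^V$,
\[
g_n(s)=\lambda_{\max}\Bigl(\sum_i(a_i-s)(P^V_n)_{ii}\Bigr).
\]

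\textbf{Step 2: identify the limit of the roots.} For a fixed $V$, Lemma~\ref{lem:random-matrix-input} shows that almost surely, simultaneously for all $a$ and $s$,
\[
g_n(s)\to g(s):=\max_{u\in\mathscr D_{k,t}}\sum_i(a_i-s)u_i .
\]
The key point is that $0\notin\mathscr D_{k,t}$, because $\sum_i c_t(0)=kt>1/k$ exactly when $t>k^{-2}$. By compactness, $m:=\min_{u\in\mathscr D_{k,t}}\sum_iu_i>0$.

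Hence $g$ is strictly decreasing with slope at most $-m$. Its root $s_*$ satisfies $\max_{u\in\mathscr D_{k,t}}\sum_i(a_i-s_*)u_i=0$, which is equivalent to
\[
s_*=\max_{u\in\mathscr D_{k,t}}\frac{\sum_ia_iu_i}{\sum_iu_i}=\max_{\lambda\in\Lambda_{k,t}}\sum_ia_i\lambda_i .
\]
Since $\Lambda_{k,t}$ is permutation invariant, this equals $\max_U\max_\lambda\Tr(H\,U\diag(\lambda)U^*)=h_{\mathscr K_{k,t}}(H)$.

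For the finite-$n$ functions, apply the lemma with $a=-\mathbf 1$ to get $\lambda_{\min}(P_{A,n})\to m$. Therefore each $g_n$ is decreasing with slope at most $-\lambda_{\min}(P_{A,n})\le -m/2$ for large $n$. Convergence $g_n\to g$ at the points $s_*\pm\varepsilon$ then forces $|s_n-s_*|<\varepsilon$ eventually.

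\textbf{Step 3: make the convergence uniform in $H$.} This is the main obstacle. The lemma is stated for one fixed basis, whereas $V$ ranges over an uncountable group. I would intersect the probability-one events over a countable dense set of unitaries $V$ and rational $a$.

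To pass to all $H$, I would use equicontinuity. The map $X\mapsto\Tr_B[P_n(I\otimes X)]$ has operator norm at most $k\|X\|_\infty$, uniformly in $n$. Combined with the slope bound $m/2$, this makes $H\mapsto s_n(H)$ Lipschitz with constant $O(k/m)$, uniformly in large $n$. The limit $h_{\mathscr K_{k,t}}$ is $1$-Lipschitz in $H$, since $\mathscr K_{k,t}\subset\mc D_k$.

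Pointwise convergence on a dense set, together with equi-Lipschitz functions on the compact unit ball of Hermitian matrices, gives uniform convergence. This yields $d_H^{(1)}(\mathscr C_n,\mathscr K_{k,t})\to0$ almost surely.
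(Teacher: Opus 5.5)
Your proposal is correct and follows essentially the same route as the paper: support-function duality for the trace-norm Hausdorff distance, the characterization of $h_{\mathscr C_n}(H)$ as the root of $s\mapsto\lambda_{\max}(T_n(H-sI_k))$, reduction to diagonal $H$ via the distributional invariance $P_n\mapsto(I\otimes V^*)P_n(I\otimes V)$, the random compression lemma, and a countable-dense-set plus Lipschitz argument. Your slope bounds via $\lambda_{\min}(P_{A,n})\to m>0$ justify the interchange of limit and root more explicitly than the paper does. Your $O(k/m)$ equi-Lipschitz estimate is unnecessary, however, since $h_{\mathscr C_n}$ is a support function of a subset of $\mc D_k$ and is therefore automatically $1$-Lipschitz in operator norm.
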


\begin{corollary}\label{cor:single-asymptotic}
For every fixed $p\in(0,\infty)$, almost surely,
\begin{equation}\label{eq:single-fixed-k-limit}
    \lim_{n\to\infty}S_p^{\min}(\Phi_n)
    =\min_{\sigma\in\mathscr K_{k,t}}S_p(\sigma).
\end{equation}
Moreover, for fixed $t\in(0,1)$ and $p\in(0,\infty)$, as
$k\to\infty$ through sufficiently large integers,
\begin{equation}\label{eq:single-large-k}
    \min_{\sigma\in\mathscr K_{k,t}}S_p(\sigma)
    =\log k-\frac{2p(1-t)}{t k^2}+o(k^{-2}).
\end{equation}
\end{corollary}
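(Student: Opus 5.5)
The proof has two independent parts: a soft continuity argument for \eqref{eq:single-fixed-k-limit} and a perturbative computation for \eqref{eq:single-large-k}.

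\emph{Fixed-$k$ limit.} For $p\in(0,\infty)$ the map $\sigma\mapsto S_p(\sigma)$ is continuous on the compact set $\mathcal D_k$, hence uniformly continuous there in trace norm. Let $\omega$ be its modulus of continuity. Both $\mathscr C_n$ and $\mathscr K_{k,t}$ are compact: $\mathscr C_n$ is the continuous image of $\mathcal D_n$. Then
\[
\Bigl|\min_{\mathscr C_n}S_p-\min_{\mathscr K_{k,t}}S_p\Bigr|\le\omega\bigl(d_H^{(1)}(\mathscr C_n,\mathscr K_{k,t})\bigr).
\]
This holds because every minimizer on one set has a point of the other set within the Hausdorff distance. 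Theorem~\ref{thm:output-space-limit} then gives \eqref{eq:single-fixed-k-limit} on the same probability-one event. This part is routine.

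\emph{Large-$k$ asymptotics, reduction.} By unitary invariance of $S_p$ and \eqref{eq:def-K}, the problem reduces to minimizing $S_p(\lambda)$ over $\lambda\in\Lambda_{k,t}$, that is, over $\lambda=u/\sum u_i$ with $u\in\mathscr D_{k,t}$. The constraint function $c_t$ vanishes exactly at $u=t$, and $c_t$ is smooth and strictly convex near $t$. Its Taylor expansion there is
\[
c_t(t+x)=\frac{x^2}{4t(1-t)}+O(x^3).
\]
Indeed, $\frac{d}{du}\bigl(\sqrt{t(1-u)}-\sqrt{u(1-t)}\bigr)$ at $u=t$ equals $-\frac{1}{2\sqrt{t(1-t)}}$, and squaring gives the coefficient. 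Write $u_i=t+x_i$. To leading order the constraint becomes $\sum x_i^2\le 4t(1-t)/k$, so each $x_i=O(k^{-1/2})$ on average. The normalized spectrum is
\[
\lambda_i=\frac{1}{k}\cdot\frac{1+x_i/t}{1+\bar x/t},
\]
where $\bar x$ is the mean of the $x_i$, so $\lambda_i=\frac1k(1+y_i)$ with $\sum y_i=0$ and $y_i=(x_i-\bar x)/(t+\bar x)$. The expansion
\[
\Tr\lambda^p=k^{1-p}\Bigl(1+\tfrac{p(p-1)}{2k}\textstyle\sum_i y_i^2+\dots\Bigr)
\]
then gives
\[
S_p=\log k-\frac{p}{2k}\sum_i y_i^2+O\Bigl(\frac1k\sum_i|y_i|^3\Bigr).
\]
The case $p=1$ follows by the same expansion of $-\sum\lambda\log\lambda$. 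Maximizing $\sum y_i^2\approx t^{-2}\sum(x_i-\bar x)^2$ subject to $\sum x_i^2\le 4t(1-t)/k$ is optimal at $\bar x=0$, giving $\sum y_i^2\le 4(1-t)/(tk)$. Hence $S_p\ge\log k-2p(1-t)/(tk^2)+\text{error}$.

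\emph{Matching upper bound.} Choose $x_i=\pm\varepsilon$ with half the signs each (for $k$ odd, put one coordinate at $0$, which costs only $O(k^{-3})$). Take $\varepsilon$ solving $\sum_i c_t(t+x_i)=1/k$ exactly, so that $\varepsilon^2=4t(1-t)/k^2\,(1+o(1))$. Then $\sum x_i=0$, $y_i=\pm\varepsilon/t$, and the third-order error is $O(k\varepsilon^3)/k=O(k^{-3})$. This attains the claimed value.

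\emph{Main obstacle.} The hardest part is controlling the error terms in the lower bound uniformly over all of $\mathscr D_{k,t}$, since a minimizer might a priori put a few coordinates far from $t$ instead of spreading the deviation evenly. To handle this I would argue as follows. Because $c_t(t+x)\ge c\,x^2$ holds globally on $[0,1]$ for some $c=c(t)>0$, every $|x_i|\le C k^{-1/2}$ and $\sum x_i^2\le C/k$. Consequently $|y_i|\le Ck^{-1/2}$ uniformly. The cubic remainder is then at most $\frac{1}{k}\max_i|y_i|\sum_i y_i^2=O(k^{-5/2})=o(k^{-2})$, using a Taylor bound with uniformly bounded derivatives of $z\mapsto(1+z)^p$ on $|z|\le 1/2$. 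The quadratic constraint itself also has a cubic error. To absorb it, I would use $c_t(t+x)\ge\frac{x^2}{4t(1-t)}(1-C|x|)$, which yields
\[
\sum_i x_i^2\le\frac{4t(1-t)}{k}\bigl(1+O(k^{-1/2})\bigr),
\]
and this is enough for the $o(k^{-2})$ conclusion. One must additionally check that the normalization $\sum_i u_i>0$ is bounded below by roughly $kt/2$, which the same bound gives.
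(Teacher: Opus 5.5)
Your proposal is correct and is essentially the paper's proof. For the fixed-$k$ limit, both use uniform continuity of $S_p$ on $\mathcal D_k$ together with Theorem~\ref{thm:output-space-limit}. For the large-$k$ expansion, both reduce to $\Lambda_{k,t}$, use global quadratic comparability of $c_t$ around $t$, pass to the centered variables $(x_i-\bar x)/(t+\bar x)$, and apply a uniform second-order Taylor expansion of $S_p$ with an $O(k^{-5/2})$ remainder.

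The differences are cosmetic. You saturate the constraint with a $\pm\varepsilon$ sign vector, where the paper uses a shrunken cosine vector. When you replace $(t+\bar x)^{-2}$ by $t^{-2}$, you should say explicitly that Cauchy--Schwarz gives $|\bar x|\le(\sum_i x_i^2/k)^{1/2}=O(k^{-1})$. At the exact maximizer $\bar x$ is slightly negative rather than zero, but this is negligible at order $k^{-2}$.
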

\begin{proof}
For fixed $k$ and $0<p<\infty$, the R\'enyi entropy is uniformly
continuous on $\mc D_k$. The first
assertion therefore follows from
Theorem~\ref{thm:output-space-limit}. The large-$k$ estimate is presented in Appendix~\ref{app:entropy-asymptotics}.
\end{proof}
Now we prove the main result. The proof is based on support functions. For a compact convex set $\mathscr K\subseteq \mc D_k$, 
\begin{equation}
    h_{\mathscr K}(H):=\sup_{\sigma\in \mathscr K}\Tr(H\sigma),
    \qquad H\in \mb M_k^{sa}.
\end{equation}
The following characterization is standard in functional analysis:
\begin{lemma}\label{lem:support-characterization}
    A compact convex set $\mathscr K\subseteq \mc D_k$ is uniquely determined by its support function:
    \begin{equation}
        \mathscr K
        =
        \left\{
            \sigma\in \mc D_k:
            \Tr(H\sigma)\le h_{\mathscr K}(H),\ \forall H\in \mb M_k^{sa}
        \right\}.
    \end{equation}
\end{lemma}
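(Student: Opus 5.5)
The inclusion "$\subseteq$" is immediate from the definition of $h_{\mathscr K}$. If $\sigma\in\mathscr K$, then for every $H\in\mb M_k^{sa}$ we have $\Tr(H\sigma)\le\sup_{\sigma'\in\mathscr K}\Tr(H\sigma')=h_{\mathscr K}(H)$. Since $\mathscr K\subseteq\mc D_k$ by assumption, $\sigma$ belongs to the right-hand side.

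For "$\supseteq$" we argue by contraposition, using the separating hyperplane theorem in the finite-dimensional real inner product space $(\mb M_k^{sa},\langle X,Y\rangle=\Tr(XY))$. Let $\sigma\in\mc D_k\setminus\mathscr K$. Since $\mathscr K$ is nonempty, compact and convex, the nearest-point projection $\sigma_0\in\mathscr K$ of $\sigma$ (in Hilbert–Schmidt norm) exists, and $\sigma_0\neq\sigma$. Set
\[
H:=\sigma-\sigma_0\in\mb M_k^{sa}.
\]
The variational characterization of the projection onto a closed convex set gives $\Tr\bigl(H(\tau-\sigma_0)\bigr)\le0$ for all $\tau\in\mathscr K$. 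Hence
\[
h_{\mathscr K}(H)=\Tr(H\sigma_0)<\Tr(H\sigma_0)+\Tr(H^2)=\Tr(H\sigma),
\]
where the supremum is attained at $\sigma_0$ by compactness. So $\sigma$ violates the inequality for this particular $H$ and does not lie in the right-hand side.

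For the case $\mathscr K=\emptyset$ we use the convention $h_\emptyset\equiv-\infty$, under which both sides are empty.

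Finally, "uniquely determined" follows directly: if two compact convex subsets of $\mc D_k$ have the same support function, they are equal to the same set in the displayed formula, and hence to each other.

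The only step with real content is the strict separation. It relies on both closedness and convexity of $\mathscr K$, which is exactly why these are hypotheses; everything else is bookkeeping.
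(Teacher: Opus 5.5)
Your proof is correct and follows the same route as the paper: the inclusion $\subseteq$ is immediate, and the reverse inclusion comes from strictly separating a point $\sigma\notin\mathscr K$ from the compact convex set $\mathscr K$. The only difference is that you construct the separating functional explicitly as $H=\sigma-\sigma_0$ via the Hilbert--Schmidt nearest-point projection, where the paper simply cites the separating hyperplane theorem. One small aside: the supremum is attained at $\sigma_0$ because of the variational inequality itself, not because of compactness.
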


\begin{proof}
    The inclusion ``$\subseteq$'' is immediate. Conversely, if $\sigma\notin \mathscr K$, then by the separating hyperplane theorem there exists $H\in \mb M_k^{sa}$ such that
    \[
        \Tr(H\sigma)>\sup_{\tau\in \mathscr K}\Tr(H\tau)=h_{\mathscr K}(H),
    \]
    which proves the reverse inclusion.
\end{proof}
It remains to study the limit of the support functions. For $H\in \mb M_k^{sa}$, write
\[
    h_n(H):=\sup_{\sigma\in \mathscr C_n}\Tr(H\sigma).
\]
Also define
\begin{equation}\label{eq:def-T-map}
    T_n:\mc L(\mc H_B)\to \mc L(\mc H_A),\qquad
    T_n(H):=\Tr_B\left(P_{AB,n}(I_A\otimes H)\right).
\end{equation}
\begin{lemma}\label{lem:support-channel}
    For every $H=H^\dagger\in \mb M_k$,
    \begin{equation}\label{eq:support-channel}
        h_n(H)
        =
        \inf\left\{
            \lambda\in\mb R:
            \lambda_{\max}\bigl(T_n(H-\lambda I_k)\bigr)\le 0
        \right\}.
    \end{equation}
\end{lemma}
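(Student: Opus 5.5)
We rewrite $\Tr(H\Phi_n(\rho))$ in terms of $T_n$ and then reduce the support function to an eigenvalue condition. For a density $\rho\in\mc D_n$, set $\tau:=P_{A,n}^{-1/2}\rho^T P_{A,n}^{-1/2}\geq0$. By \eqref{eq:def-channel} and the definition \eqref{eq:def-T-map},
\[
    \Tr(H\Phi_n(\rho))=\Tr\bigl[P_n(\tau\otimes H)\bigr]=\Tr\bigl[\tau\,T_n(H)\bigr].
\]
Since $T_n(I_k)=P_{A,n}$, we also have $\Tr(\tau\,T_n(I_k))=\Tr\rho^T=1$. The map $\rho\mapsto\tau$ is a bijection from $\mc D_n$ onto the set $\{\tau\geq0:\Tr(\tau P_{A,n})=1\}$, because $P_{A,n}>0$. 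Hence
\[
    h_n(H)=\sup\bigl\{\Tr(\tau T_n(H)):\ \tau\geq0,\ \Tr(\tau T_n(I_k))=1\bigr\}.
\]
By linearity of $T_n$, for any $\lambda\in\mb R$ and any admissible $\tau$,
\[
    \Tr(\tau T_n(H))-\lambda=\Tr\bigl(\tau\,T_n(H-\lambda I_k)\bigr).
\]

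We now prove the two inequalities. First, suppose $\lambda_{\max}(T_n(H-\lambda I_k))\le0$. Then for every $\tau\geq0$ the right-hand side above is $\le0$, so $h_n(H)\le\lambda$. Taking the infimum over such $\lambda$ gives $h_n(H)\le$ the right-hand side of \eqref{eq:support-channel}.

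Conversely, set $\lambda=h_n(H)$. Then $\Tr(\tau T_n(H-\lambda I_k))\le0$ for every admissible $\tau$. Scaling removes the normalization: any nonzero $\tau\geq0$ has $\Tr(\tau P_{A,n})>0$, so it can be rescaled to an admissible one. Therefore $\Tr(\tau\,T_n(H-\lambda I_k))\le0$ for all $\tau\geq0$. Choosing $\tau=|v\rangle\langle v|$ shows that $T_n(H-\lambda I_k)\le0$, so $\lambda=h_n(H)$ belongs to the set in \eqref{eq:support-channel}. Hence the infimum is at most $h_n(H)$, and the two sides are equal.

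For the infimum to be well behaved, we note that the function $\lambda\mapsto\lambda_{\max}(T_n(H)-\lambda P_{A,n})$ is continuous and strictly decreasing, again because $P_{A,n}>0$. So the set in \eqref{eq:support-channel} is a closed half-line $[\lambda_0,\infty)$, and its infimum is attained at $\lambda_0=h_n(H)$.

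The one point needing care is the role of $P_{A,n}>0$. It is used twice: to make the substitution $\rho\mapsto\tau$ bijective onto the whole normalized cone, and to guarantee that every nonzero $\tau\geq0$ can be normalized. Both are needed to pass from the constrained supremum to the unconstrained positivity statement $T_n(H-\lambda I_k)\le0$. This holds almost surely for large $n$ by Lemma~\ref{lem:haar-projection-partial-trace}, which is exactly where $\Phi_n$ is defined.
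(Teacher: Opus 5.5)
Your proof is correct and follows essentially the same route as the paper: the same identity $\Tr(H\Phi_n(\rho))=\Tr\bigl(P_{A,n}^{-1/2}\rho^TP_{A,n}^{-1/2}\,T_n(H)\bigr)$, combined with $T_n(I_k)=P_{A,n}>0$. The only difference is presentational. The paper first identifies $h_n(H)=\lambda_{\max}\bigl(P_{A,n}^{-1/2}T_n(H)P_{A,n}^{-1/2}\bigr)$ and then uses congruence by $P_{A,n}^{1/2}$ to get $h_n(H)\le\lambda\iff T_n(H-\lambda I_k)\le0$, whereas you prove the two inequalities separately through the normalized cone and rescaling, which amounts to the same equivalence.
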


\begin{proof}
    For $\rho\in\mc D_n$, we have
    \[
    \begin{aligned}
        \Tr\left(H\Phi_n(\rho)\right)
        &=
        \Tr\left[
            (I_A\otimes H)P_{AB,n}
            \left(P_{A,n}^{-1/2}\rho^T P_{A,n}^{-1/2}\otimes I_B\right)
        \right] \\
        &=
        \Tr\left[
            T_n(H)P_{A,n}^{-1/2}\rho^T P_{A,n}^{-1/2}
        \right] \\
        &=
        \Tr\left[
            P_{A,n}^{-1/2}T_n(H)P_{A,n}^{-1/2}\rho^T
        \right].
    \end{aligned}
    \]
    Since $\rho\mapsto \rho^T$ is a bijection of $\mc D_n$,
    \[
        h_n(H)
        =
        \lambda_{\max}\left(
            P_{A,n}^{-1/2}T_n(H)P_{A,n}^{-1/2}
        \right).
    \]
    Therefore $h_n(H)\le \lambda$ if and only if
    \[
        P_{A,n}^{-1/2}T_n(H)P_{A,n}^{-1/2}\le \lambda I_n,
    \]
    which is equivalent to
    \[
        T_n(H)\le \lambda P_{A,n}.
    \]
    Since $P_{A,n}=T_n(I_k)$, this is equivalent to
    \[
        T_n(H-\lambda I_k)\le 0,
    \]
    or, equivalently,
    \[
        \lambda_{\max}\bigl(T_n(H-\lambda I_k)\bigr)\le 0.
    \]
    Taking the infimum over such $\lambda$ gives \eqref{eq:support-channel}.
\end{proof}
\begin{proof}[Proof of Theorem~\ref{thm:output-space-limit}]
Trace--operator norm duality gives
\begin{equation}\label{eq:hausdorff-support-duality}
    d_H^{(1)}(\mathscr C_n,\mathscr K_{k,t})
    =\sup_{H= H^\dagger,\|H\|_\infty\leq1}|h_n(H)-h_{\mathscr K_{k,t}}(H)|                 
\end{equation}
for compact convex $C,K\subseteq\mb M_k^{\rm sa}$.

Now we prove $\sup_{\|H\|_\infty\leq1}
    |h_{\mathscr C_n}(H)-h_{\mathscr K_{k,t}}(H)|\longrightarrow0.$ If $(h_1,\ldots,h_k)$ are the eigenvalues of $H$, then von Neumann's
trace inequality and permutation invariance of $\mathscr D_{k,t}$ give
\begin{equation}\label{eq:limit-support-formula}
    h_{\mathscr K_{k,t}}(H)
    =\max_{u\in\mathscr D_{k,t}}
      \frac{\sum_i h_i u_i}{\sum_i u_i}.
\end{equation}
We would like to show $h_{\mathscr C_n}(H)\to h_{\mathscr K_{k,t}}(H)$. We first assume $H=\diag(h_1,\ldots,h_k)$, then 
\[
\begin{aligned}
    h_{\mathscr C_n}(H) & = \inf\left\{
            \lambda\in\mb R:
            \lambda_{\max}\bigl(T_n(H-\lambda I_k)\bigr)\le 0
        \right\} \\
        & = \inf\left\{
            \lambda\in\mb R:
            \lambda_{\max}\bigl(\Tr_B(P_{AB,n} (I_A \otimes \diag(h_1-\lambda,\cdots, h_k - \lambda))) \bigr)\le 0 \right\} \\
            & = \inf\left\{
            \lambda\in\mb R: \lambda_{\max}\bigl(\sum_{i=1}^k (h_i-\lambda) P_{ii}^{(n)}\bigr) \le 0\right\} \\
            & \longrightarrow\inf\left\{
            \lambda\in\mb R: \max_{u \in \mathscr D_{k,t}} \sum_{i=1}^k (h_i - \lambda) u_i\le 0 \right\} =\max_{u\in\mathscr D_{k,t}}
      \frac{\sum_i h_i u_i}{\sum_i u_i}, \\
\end{aligned}
\]
where the first equality follows from Lemma~\ref{lem:support-channel}; the second equality follows from the definition of $T_n$~\eqref{eq:def-T-map} and the third equality is a partial trace calculation; the convergence follows from Lemma~\ref{lem:random-matrix-input}.

For a general fixed $H=V\diag(h)V^*$, replace $P_n$ by
\[
    Q_n=(I_n\otimes V^*)P_n(I_n\otimes V).
\]
The sequence $(Q_n)$ has the same distribution as $(P_n)$, and
partial-trace cyclicity gives
\[
    \Tr_B Q_n=P_{A,n},
    \qquad
    T_n^Q(\diag h)
    =T_n^P\!\left(V\diag(h)V^*\right)
    =T_n^P(H).
\]
Hence, for each fixed $H\in\mb M_k^{\rm sa}$, the same pointwise
conclusion holds almost surely.

Finally, intersect the probability-one events over a deterministic
countable dense subset of the operator-norm unit ball. Support functions
of subsets of $\mc D_k$ are $1$-Lipschitz in operator norm. For a
finite $\delta$-net, the supremum of the error is therefore bounded by
the maximum error on the net plus $2\delta$. Letting first $n\to\infty$
and then $\delta\downarrow0$ gives
\[
    \sup_{\|H\|_\infty\leq1}
    |h_{\mathscr C_n}(H)-h_{\mathscr K_{k,t}}(H)|\longrightarrow0.
\]
The result now follows from \eqref{eq:hausdorff-support-duality}.
\end{proof}

\section{Output of the product--conjugate channel}
\label{sec:bell-output}
Let \(P_n=P_{AB,n}\) and $\Phi_n$ be as in Section~\ref{sec:output-space}. Define
\[
    Z_n
    :=
    (\Phi_n\otimes\overline{\Phi_n})(\psi_A^+) \in \mc D_{k^2}.
\]
\begin{theorem}[Limit of the Bell output]
    \label{thm:bell-output-limit}
Fix \(k\geq2\), and suppose that
\[
    \frac{d_n}{nk}\longrightarrow t\in\left(\frac1{k^2},1\right).
\]
Then, almost surely,
\begin{equation}
    \left\|
        Z_n-
        \left[
            r_{k,t}\psi_k^+
            +(1-r_{k,t})\frac{I_{k^2}}{k^2}
        \right]
    \right\|_\infty
    \longrightarrow0,
    \label{eq:bell-output-limit}
\end{equation}
where
\begin{equation}
    r_{k,t}
    =
    \frac{k^2(1-t)}{k^4t-2k^2t+1}
    =
    \frac{k^2(1-t)}{(k^2-1)^2t+1-t}.
    \label{eq:r-kt}
\end{equation}
\end{theorem}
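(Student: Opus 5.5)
The plan is to reduce the statement to the computation of one scalar limit, using the symmetry of the Haar measure. Write the Choi matrix $J_n=(P_{A,n}^{-1/2}\otimes I_k)P_n(P_{A,n}^{-1/2}\otimes I_k)$ in block form $J_n=\sum_{i,j}J^{(n)}_{ij}\otimes|i\rangle\langle j|$, with $J^{(n)}_{ij}\in\mb M_n$. Since $J_{\overline{\Phi}_n}=\overline{J_n}$, expanding $\psi_A^+$ gives the exact matrix elements
\[
 \langle ij|Z_n|i'j'\rangle=\frac1n\sum_{a,b}\bigl(J^{(n)}_{ii'}\bigr)_{ab}\overline{\bigl(J^{(n)}_{jj'}\bigr)_{ab}}
 =\frac1n\Tr\bigl(J^{(n)}_{ii'}J^{(n)}_{j'j}\bigr).
\]
So $Z_n$ is a $k^4$-tuple of normalized traces of products of two blocks. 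Since $k$ is fixed, convergence in $\|\cdot\|_\infty$ is equivalent to entrywise convergence. I would split the proof into three steps.

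\emph{Step 1: existence of the limit.} Each $J^{(n)}_{ij}$ is a noncommutative rational function of the blocks $P^{(n)}_{ij}$, since $P_{A,n}=\sum_iP^{(n)}_{ii}$. By the strong-convergence input already used in Lemma~\ref{lem:random-matrix-input} \cite{CollinsMale2014,Nechita2018}, the block family converges strongly. Moreover, $\lambda_{\min}(P_{A,n})$ stays bounded away from $0$ almost surely: by Lemma~\ref{lem:random-matrix-input} with $a=-(1,\dots,1)$, it converges to $\min_{u\in\mathscr D_{k,t}}\sum_iu_i$, which is positive because $t>k^{-2}$. Hence $P_{A,n}^{-1/2}$ can be approximated uniformly by polynomials, and each normalized trace converges almost surely to a deterministic limit $z_{ij,i'j'}$.

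\emph{Step 2: the limit is isotropic.} For $V\in\mathcal U(k)$, the projection $(I_n\otimes V)P_n(I_n\otimes V^*)$ has the same law as $P_n$ and the same $P_{A,n}$. It produces the channel $V\Phi_n(\cdot)V^*$, so the corresponding Bell output is $(V\otimes\overline V)Z_n(V\otimes\overline V)^*$. The limit $Z:=\lim Z_n$ is deterministic, so it satisfies $Z=(V\otimes\overline V)Z(V\otimes\overline V)^*$ for every $V$, and therefore $Z$ lies in $\mathrm{span}\{I_{k^2},\psi_k^+\}$. Since $\Tr Z=1$, we get $Z=r\psi_k^++(1-r)I_{k^2}/k^2$, and $r$ is determined by the single number
\[
 \Tr(Z\psi_k^+)=\lim_n\frac1{nk}\sum_{i,j}\Tr\bigl(J^{(n)}_{ij}J^{(n)}_{ji}\bigr)=\lim_n\frac{1}{nk}\Tr J_n^2 .
\]
It therefore suffices to prove
\[
 \lim_{n}\frac1{nk}\Tr\bigl[(P_n(P_{A,n}^{-1}\otimes I_k))^2\bigr]=r_{k,t}+\frac{1-r_{k,t}}{k^2}.
\]
As a sanity check, at $t=1$ we have $P=I$, $J=I/k$, $Z=I/k^2$ and $r=0$, which matches \eqref{eq:r-kt}.

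\emph{Step 3: the second-order trace, which I expect to be the main obstacle.} The difficulty is that $P_{A,n}^{-1}$ is itself a function of $P_n$, so ordinary asymptotic freeness from deterministic matrices does not apply directly. I would first try operator-valued free probability over $\mb M_k$. The limit of $P_n$ in $(\mb M_n\otimes\mb M_k,\tfrac1n\Tr\otimes\mathrm{id})$ is an $\mb M_k$-valued element whose operator-valued cumulants are scalar multiples of the free cumulants of $b_t$, as in the block-modification framework of \cite{ArizmendiNechitaVargas2016}. The key quantity is $\Gamma(X)=\tfrac1n\Tr_A\bigl[P(X\otimes I)P(X\otimes I)\bigr]$ with $X=P_A^{-1}$.

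The decisive structural input would be an asymptotic ``conditional isotropy'': given $P_A$, the mixed sum $\sum_{i,j}P_{ij}XP_{ji}$ should be asymptotically a combination of the two invariants $P_AXP_A/k$ and a scalar term built from the traces of $X$ and $P_A$. Inserting $X=P_A^{-1}$ reduces the problem to moments of the limiting law of $P_A$, namely $\boxplus$-powers of $D_{1/k}b_t$ as in Lemma~\ref{lem:random-matrix-input}, evaluated at $P_A^{\pm1}$. If the isotropy decomposition proves hard to justify, my fallback is a direct Weingarten computation. Expand $(P_A+\varepsilon)^{-1}$ as a Neumann series around its limiting mean, compute $\mathbb E\,\tfrac1{nk}\Tr J_n^2$ to leading order, and verify the closed form $(k^4t-2k^2t+1)^{-1}k^2(1-t)$. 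Almost sure convergence then follows from Step 1, and the operator-norm statement follows from Step 2.
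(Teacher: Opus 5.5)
Your Steps 1 and 2 match the paper's proof: the entry formula $\frac1n\Tr(J^{(n)}_{ii'}J^{(n)}_{j'j})$, a deterministic limit from joint strong convergence with $\lambda_{\min}(P_{A,n})\to m_{k,t}>0$, the $V\otimes\overline V$ covariance forcing an isotropic limit, and the reduction to $\lim_n\frac1{nk}\Tr J_n^2$. The gap is Step 3, which carries the entire quantitative content of the theorem. Without it you have only shown that the limit is isotropic with some unknown $r$, and neither route you sketch works as described.

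The ``conditional isotropy'' ansatz is false. For $X=I$ one has exactly $\sum_{i,j}P_{ij}P_{ji}=P_A$, which is $P^2=P$ read blockwise. But $P_A$ is not of the form $aP_A^2/k+b\,I$ with scalars $a,b$, not even after pairing with functions of $P_A$. Such an identity would force the limiting law $(D_{1/k}b_t)^{\boxplus k^2}$ of $P_{A,n}$ to be supported on at most two points, which it is not.

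The Neumann-series/Weingarten fallback has no small parameter. For fixed $k$, $P_{A,n}-\frac{d_n}{n}I$ has operator norm of order one, not $o(1)$. So no finite truncation gives the limit: you would need all mixed moments $\tau(p_A^{m}p_{ij}p_A^{l}p_{ji})$ and a resummation, which is the original problem. For $t$ near $k^{-2}$ the series even diverges. For example, with $k=2$ and $t\downarrow1/4$, $\lambda_{\max}(P_{A,n})$ tends to about $3/2$ while the mean is about $1/2$.

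The missing idea is the paper's log-determinant trick. It turns this mixed moment into one-variable spectral data that the block-modification lemma already controls.
\begin{itemize}
\item Put $s_X:=\sum_{i,j}X_{ji}p_{ij}$, so that $s_{I_k}=p_A$ and $s_{E_{ji}}=p_{ij}$, and set $\mathcal F(X):=\tau(\log s_X)$.
\item Then $D^2\mathcal F(I_k)[H,K]=-\tau(s^{-1}s_Hs^{-1}s_K)$, so the target equals $-\frac1k\sum_{i,j}D^2\mathcal F(I_k)[E_{ji},E_{ij}]$.
\item Unitary invariance forces $D^2\mathcal F(I_k)[H,K]=c_0\Tr H\Tr K+c_1\Tr(HK)$, and $s_{xI_k}=xs$ gives $c_0k^2+c_1k=-1$.
\item For traceless $H$, the law of $s_{I+\varepsilon H}$ is the free convolution $\boxplus_\ell(D_{a_\ell/k}b_t)^{\boxplus k}$ over the eigenvalues $a_\ell$ of $I+\varepsilon H$. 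Its $R$-transform is explicitly $\sum_\ell a_\ell R_t(a_\ell w/k)$.
\item Writing $\tau(\log s)$ through the Cauchy transform on the negative axis expresses the second variation as $\int_{w_0}^0\partial_\varepsilon^2R_\varepsilon(w)\,dw$, with $w_0=-\tau(s^{-1})$.
\item This gives $c_1=-k(1-t)/(k^4t-2k^2t+1)$, hence $\alpha_{k,t}=-c_0-kc_1$, and from that $r_{k,t}$.
\end{itemize}
Without this computation, or an equivalent operator-valued computation actually carried out, the value $r_{k,t}$ is not established.
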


\begin{proof}
We expand the Choi matrix $J_n = (P_{A,n}^{-1/2} \otimes I_B) P_{AB,n} (P_{A,n}^{-1/2} \otimes I_B)$ in the standard basis
\[
    J_n=\sum_{i,j=1}^kJ_{ij}^{(n)}\otimes |i\rangle \langle j|.
\]
The Choi formula gives
\begin{equation}
    (Z_n)_{(i,p),(j,q)}
    =\frac1n\Tr\!\left(
        J_{ij}^{(n)}J_{qp}^{(n)}
    \right).
    \label{eq:Z-block-entry}
\end{equation}
The normalized-block convergence established in
Proposition~\ref{prop:normalized-choi-block-limit} therefore implies
that every entry in \eqref{eq:Z-block-entry} converges almost surely to
a deterministic limit. Since \(k\) is fixed, \(Z_n\) converges in
operator norm to a deterministic density matrix \(Z_{k,t}\).

For every \(V\in\mathcal U(k)\), replacing \(P_n\) by
\[
    (I_n\otimes V)P_n(I_n\otimes V^*)
\]
does not change its distribution and conjugates \(Z_n\) by
\(V\otimes\overline V\). Uniqueness of the deterministic limit thus
implies
\[
    (V\otimes\overline V)Z_{k,t}
    (V\otimes\overline V)^*
    =Z_{k,t}
    \qquad (V\in\mathcal U(k)).
\]
The commutant of the representation
\(V\mapsto V\otimes\overline V\) is spanned by
\(\psi_k^+\) and \(I_{k^2}-\psi_k^+\). Consequently,
\begin{equation}
    Z_{k,t}
    =\alpha_{k,t}\psi_k^+
     +\beta_{k,t}(I_{k^2}-\psi_k^+),
    \qquad
    \alpha_{k,t}+(k^2-1)\beta_{k,t}=1.
    \label{eq:isotropic-Z-limit}
\end{equation}

Equation~\eqref{eq:Z-block-entry} also gives the exact identity
\begin{equation}
    \langle\psi_k^+|Z_n|\psi_k^+\rangle
    =\frac1{nk}\Tr(J_n^2).
    \label{eq:bell-overlap-purity}
\end{equation}
The free-probability calculation in
Proposition~\ref{prop:asymptotic-choi-purity} yields
\begin{equation}
    \alpha_{k,t}
    =
    \frac{k^4-(1+t)k^2+1}
         {k^2(k^4t-2k^2t+1)}.
    \label{eq:alpha-kt}
\end{equation}
Using the normalization in \eqref{eq:isotropic-Z-limit}, we obtain
\begin{equation}
    \beta_{k,t}
    =
    \frac{(k^2-1)(k^2t-1)}
         {k^2(k^4t-2k^2t+1)}.
    \label{eq:beta-kt}
\end{equation}
A direct simplification gives
\(\alpha_{k,t}-\beta_{k,t}=r_{k,t}\), which proves
\eqref{eq:bell-output-limit}.
\end{proof}

Denote the spectrum of \(Z_{k,t}\) by
\begin{equation}
    \lambda^{\mathrm{Bell}}_{k,t}
    :=
    \left(
        \alpha_{k,t},
        \underbrace{\beta_{k,t},\ldots,\beta_{k,t}}_{k^2-1}
    \right).
    \label{eq:def-lambda-bell}
\end{equation}

\begin{corollary}
    \label{cor:joint-asymptotic}
Fix \(k\geq2\), \(t\in(1/k^2,1)\), and \(0<p<\infty\). Then,
almost surely,
\begin{equation}
    \limsup_{n\to\infty}
    S_p^{\min}(\Phi_n\otimes\overline{\Phi_n})
    \leq
    S_p(\lambda^{\mathrm{Bell}}_{k,t}).
    \label{eq:joint-fixed-k-bound}
\end{equation}
For fixed \(t\in(0,1)\) and \(0<p<\infty\), as \(k\to\infty\),
\begin{equation}
    S_p(\lambda^{\mathrm{Bell}}_{k,t})
    =2\log k-\frac{A_p(t)}{k^2}+o(k^{-2}),
    \label{eq:joint-asymptotic}
\end{equation}
where, for \(p\neq1\),
\begin{equation}
    A_p(t)
    :=
    \frac{t^{-p}-1-p(t^{-1}-1)}{p-1},
    \label{eq:def-Apt}
\end{equation}
and
\begin{equation}
    A_1(t)
    :=
    t^{-1}\log(t^{-1})-t^{-1}+1.
    \label{eq:def-A1t}
\end{equation}
\end{corollary}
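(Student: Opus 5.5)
The proof has two parts: a fixed-$k$ bound obtained by testing the product channel on one input, and a large-$k$ expansion of the entropy of an explicit two-valued spectrum.

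\textbf{Fixed-$k$ bound.} We plug the maximally entangled input $\psi_A^+\in\mc D(\mc H_A\otimes\mc H_A)$ into the minimisation. By definition of $S_p^{\min}$,
\[
S_p^{\min}(\Phi_n\otimes\overline{\Phi_n})\le S_p(Z_n)
\]
for every $n$ large enough that $P_{A,n}>0$. Theorem~\ref{thm:bell-output-limit} gives $Z_n\to Z_{k,t}$ in operator norm almost surely. Since $k$ is fixed, $Z_n$ and $Z_{k,t}$ live in $\mc D_{k^2}$, where $S_p$ is continuous for $0<p<\infty$. Hence $S_p(Z_n)\to S_p(Z_{k,t})=S_p(\lambda^{\mathrm{Bell}}_{k,t})$. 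Taking $\limsup$ gives \eqref{eq:joint-fixed-k-bound}.

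\textbf{Large-$k$ expansion.} Write $\varepsilon=k^{-2}$. From \eqref{eq:alpha-kt} and \eqref{eq:beta-kt}, expand the eigenvalues in $\varepsilon$:
\[
\alpha_{k,t}=\frac{1}{k^2 t}\bigl(1+O(k^{-2})\bigr),\qquad \beta_{k,t}=\frac{1-\alpha_{k,t}}{k^2-1}.
\]
Set $a=k^2\alpha_{k,t}$ and $b=k^2\beta_{k,t}$. Then $a=t^{-1}+O(k^{-2})$ and
\[
b=\frac{1-a\varepsilon}{1-\varepsilon}=1+\varepsilon(1-a)+O(\varepsilon^2)=1+\varepsilon(1-t^{-1})+O(\varepsilon^2).
\]

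\emph{Case $p\neq1$.} Compute
\[
\Tr Z^p=k^{-2p}\bigl(a^p+(k^2-1)b^p\bigr).
\]
Using $b^p=1+p\varepsilon(1-t^{-1})+O(\varepsilon^2)$ and $a^p=t^{-p}+O(\varepsilon)$,
\[
k^{2p-2}\Tr Z^p=1+\varepsilon\bigl(t^{-p}-1+p(1-t^{-1})\bigr)+O(\varepsilon^2).
\]
Taking $\frac1{1-p}\log$ yields
\[
S_p=2\log k+\frac{\varepsilon}{1-p}\bigl(t^{-p}-1-p(t^{-1}-1)\bigr)+O(\varepsilon^2).
\]
This is exactly $2\log k-A_p(t)\varepsilon$, because $\frac{1}{1-p}=-\frac{1}{p-1}$.

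\emph{Case $p=1$.} We have
\[
H=-\alpha\log\alpha-(k^2-1)\beta\log\beta,\qquad \log\alpha=-2\log k+\log a,\qquad \log\beta=-2\log k+\log b.
\]
Since $\alpha+(k^2-1)\beta=1$, this gives
\[
H=2\log k-\varepsilon a\log a-(1-\varepsilon)b\log b.
\]
Expanding $b\log b=\varepsilon(1-t^{-1})+O(\varepsilon^2)$ gives
\[
H=2\log k-\varepsilon\bigl(t^{-1}\log t^{-1}+1-t^{-1}\bigr)+O(\varepsilon^2),
\]
which matches \eqref{eq:def-A1t}.

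The only delicate point is keeping the $O(\varepsilon)$ correction inside $a$ under control. It contributes only at order $\varepsilon^2$, since $a$ always appears multiplied by $\varepsilon$. The hypothesis $t\in(1/k^2,1)$ holds for all large $k$ at fixed $t$, and the errors are in fact $O(k^{-4})$, which is stronger than the stated $o(k^{-2})$.
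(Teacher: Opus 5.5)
Your proposal is correct and follows the paper's own argument. The fixed-$k$ bound is obtained by testing on $\psi_A^+$ and then using Theorem~\ref{thm:bell-output-limit} together with continuity of $S_p$ on $\mc D_{k^2}$. The large-$k$ expansion is the same computation as in Proposition~\ref{prop:app-bell-entropy-asymptotics}: your rescaled eigenvalues $a=k^2\alpha_{k,t}$ and $b=k^2\beta_{k,t}$ play the role of the paper's $x_m$ and $y_m$. The paper expands via $r_{k,t}$ rather than $\alpha_{k,t}$, which makes no difference, and both arguments obtain the same $O(k^{-4})$ error.
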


\begin{proof}
For every \(n\),
\[
    S_p^{\min}(\Phi_n\otimes\overline{\Phi_n})
    \leq S_p(Z_n).
\]
Theorem~\ref{thm:bell-output-limit} and continuity of \(S_p\) in fixed
dimension give
\[
    S_p(Z_n)\longrightarrow
    S_p(\lambda^{\mathrm{Bell}}_{k,t}),
\]
which proves \eqref{eq:joint-fixed-k-bound}. The expansion
\eqref{eq:joint-asymptotic}, including the continuous value at \(p=1\),
is proved in Appendix~\ref{app:entropy-asymptotics}.
\end{proof}

\section{Proof of the main result}
\label{sec:main-proof}

\subsection{The product--conjugate construction: \(p>3/4\)}

We first isolate the finite-\(k\) consequence of the preceding limits.

\begin{proposition}[Fixed-\(k\) Bell criterion]
    \label{prop:fixed-k-bell-criterion}
Fix \(k\geq2\), \(t\in(1/k^2,1)\), and \(0<p<\infty\). If
\begin{equation}
    S_p(\lambda^{\mathrm{Bell}}_{k,t})
    <
    2\min_{\sigma\in\mathscr K_{k,t}}S_p(\sigma),
    \label{eq:fixed-k-bell-criterion}
\end{equation}
then, almost surely, the pair
\(\Phi_n,\overline{\Phi_n}\) violates additivity of the minimum output
\(p\)-R\'enyi entropy for every sufficiently large \(n\).
\end{proposition}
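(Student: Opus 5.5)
The plan is to combine the two almost-sure limits already established and spend a fixed gap $\varepsilon$ on the error terms. Set
\[
    \varepsilon:=2\min_{\sigma\in\mathscr K_{k,t}}S_p(\sigma)-S_p(\lambda^{\mathrm{Bell}}_{k,t})>0,
\]
which is positive by \eqref{eq:fixed-k-bell-criterion}. We work on the intersection of three events, each of probability one:
\begin{enumerate}[(i)]
    \item the event of Lemma~\ref{lem:haar-projection-partial-trace} (applied for every large $n$), on which $P_{A,n}>0$ eventually, so that $\Phi_n$ is defined;
    \item the event of Corollary~\ref{cor:single-asymptotic}, on which $S_p^{\min}(\Phi_n)\to\min_{\mathscr K_{k,t}}S_p$;
    \item the event of Corollary~\ref{cor:joint-asymptotic}, on which $\limsup_n S_p^{\min}(\Phi_n\otimes\overline{\Phi_n})\le S_p(\lambda^{\mathrm{Bell}}_{k,t})$.
\end{enumerate}
A countable intersection of probability-one events still has probability one.

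The only ingredient not yet stated is that the conjugate channel has the same minimum output entropy as $\Phi_n$. Since $\overline{\Phi_n}(\rho)=\overline{\Phi_n(\overline\rho)}$ and $\rho\mapsto\overline\rho$ is a bijection of $\mc D_n$, the output set of $\overline{\Phi_n}$ is the entrywise conjugate of $\mathscr C_n$. Complex conjugation preserves spectra, because $\overline\sigma$ is Hermitian with the same eigenvalues as $\sigma$. Hence $S_p^{\min}(\overline{\Phi_n})=S_p^{\min}(\Phi_n)$ exactly for every $n$, and no separate random-matrix input is needed for the conjugate channel.

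On the full-measure event, choose $N$ such that for all $n\ge N$:
\[
    S_p^{\min}(\Phi_n)>\min_{\mathscr K_{k,t}}S_p-\frac{\varepsilon}{6},
    \qquad
    S_p^{\min}(\Phi_n\otimes\overline{\Phi_n})<S_p(\lambda^{\mathrm{Bell}}_{k,t})+\frac{\varepsilon}{3}.
\]
The first inequality comes from the limit in (ii). The second comes from the $\limsup$ bound in (iii), which needs only an upper bound and is witnessed by the Bell input. Then for $n\ge N$,
\[
    S_p^{\min}(\Phi_n)+S_p^{\min}(\overline{\Phi_n})
    >2\min_{\mathscr K_{k,t}}S_p-\frac{\varepsilon}{3}
    =S_p(\lambda^{\mathrm{Bell}}_{k,t})+\frac{2\varepsilon}{3}
    >S_p^{\min}(\Phi_n\otimes\overline{\Phi_n}),
\]
which is the claimed strict violation.

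There is no serious obstacle. The step needing the most care is the conjugation invariance, which is where $S_p^{\min}(\overline{\Phi_n})=S_p^{\min}(\Phi_n)$ is justified. One should also check that the almost-sure statements refer to a single sequence $(P_n)$, so that the events can be intersected. Finally, $p\in(0,\infty)$ is exactly the range in which Corollaries~\ref{cor:single-asymptotic} and~\ref{cor:joint-asymptotic} apply, so the continuity of $S_p$ they use is available.
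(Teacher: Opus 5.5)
Your proposal is correct and follows essentially the same route as the paper. Like the paper, you combine the conjugation invariance $S_p^{\min}(\overline{\Phi_n})=S_p^{\min}(\Phi_n)$, the almost-sure limit from Corollary~\ref{cor:single-asymptotic}, and the $\limsup$ bound from Corollary~\ref{cor:joint-asymptotic}; your explicit $\varepsilon$-bookkeeping and intersection of full-measure events simply spell out the paper's final sentence that the strict inequality persists for large $n$.
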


\begin{proof}
Complex conjugation preserves the spectrum and the input state space, so
\[
    S_p^{\min}(\overline{\Phi_n})
    =S_p^{\min}(\Phi_n)
\]
for every \(n\). By Corollary~\ref{cor:single-asymptotic},
\[
    S_p^{\min}(\Phi_n)+S_p^{\min}(\overline{\Phi_n})
    \longrightarrow
    2\min_{\sigma\in\mathscr K_{k,t}}S_p(\sigma)
\]
almost surely. On the other hand, Corollary~\ref{cor:joint-asymptotic}
gives
\[
    \limsup_{n\to\infty}
    S_p^{\min}(\Phi_n\otimes\overline{\Phi_n})
    \leq S_p(\lambda^{\mathrm{Bell}}_{k,t}).
\]
The strict inequality in \eqref{eq:fixed-k-bell-criterion} therefore
persists for all sufficiently large \(n\).
\end{proof}

\begin{lemma}[The high-\(p\) threshold]
    \label{lem:high-p-threshold}
For every \(3/4<p<\infty\), there exists \(t\in(0,1)\) such that
\begin{equation}
    A_p(t)>\frac{4p(1-t)}{t}.
    \label{eq:high-p-threshold}
\end{equation}
\end{lemma}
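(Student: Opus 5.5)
The plan is to work in the limit $t\downarrow0$ and compare the leading orders of both sides of \eqref{eq:high-p-threshold}. Multiply through by $t>0$: it suffices to show that
\[
    F_p(t):=tA_p(t)-4p(1-t)
\]
is positive for some $t\in(0,1)$. The right-hand side term $4p(1-t)$ tends to $4p$ as $t\downarrow0$. So the whole argument reduces to identifying $\lim_{t\downarrow0}tA_p(t)$ and checking that it exceeds $4p$ precisely when $p>3/4$.

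I would split into three cases according to the formula for $A_p(t)$ in \eqref{eq:def-Apt} and \eqref{eq:def-A1t}.

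\emph{Case $3/4<p<1$.} We have
\[
    tA_p(t)=\frac{t^{1-p}-t-p(1-t)}{p-1}.
\]
Since $1-p>0$, $t^{1-p}\to0$, so $tA_p(t)\to \frac{-p}{p-1}=\frac{p}{1-p}$. This limit exceeds $4p$ if and only if $1/(1-p)>4$, i.e.\ $p>3/4$. By continuity, $F_p(t)>0$ for all sufficiently small $t$.

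\emph{Case $p=1$.} We have $tA_1(t)=\log(1/t)-1+t\to+\infty$, while $4(1-t)\le4$, so small $t$ works.

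\emph{Case $p>1$.} Here $t^{1-p}\to+\infty$, so
\[
    tA_p(t)=\frac{t^{1-p}-t-p(1-t)}{p-1}\to+\infty,
\]
while $4p(1-t)$ stays bounded. Again small $t$ works.

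In every case the chosen $t$ lies in $(0,1)$, as required.

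There is no real obstacle here; the only care needed is with signs. For $p<1$ the factor $1/(p-1)$ is negative, but the numerator also tends to $-p$, so the limit $p/(1-p)$ is positive. I would also double-check the formula for $A_1$ against the $p\to1$ limit, to make sure the case $p=1$ is consistent with the other two.
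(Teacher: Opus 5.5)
Your proof is correct and follows essentially the same route as the paper: the paper substitutes \(x=t^{-1}\) and computes \(\lim_{x\to\infty}A_p(x^{-1})/\bigl(p(x-1)\bigr)\), which is exactly your \(\lim_{t\downarrow0}tA_p(t)/\bigl(p(1-t)\bigr)\), obtaining \(1/(1-p)>4\) for \(3/4<p<1\) and divergence to \(+\infty\) for \(p\ge1\). Your case-by-case limits and sign handling match the paper's.
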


\begin{proof}
Put \(x=t^{-1}>1\). For \(p\neq1\),
\[
    A_p(t)
    =
    \frac{x^p-1-p(x-1)}{p-1}.
\]
If \(0<p<1\), then
\[
    \lim_{x\to\infty}
    \frac{A_p(x^{-1})}{p(x-1)}
    =\frac1{1-p}.
\]
This limit is strictly larger than \(4\) precisely when \(p>3/4\).
For \(p=1\),
\[
    \frac{A_1(x^{-1})}{x-1}
    =
    \frac{x\log x-x+1}{x-1}
    \longrightarrow\infty,
\]
while for \(p>1\),
\[
    \frac{A_p(x^{-1})}{p(x-1)}
    \sim\frac{x^{p-1}}{p(p-1)}
    \longrightarrow\infty.
\]
Thus \eqref{eq:high-p-threshold} holds for all sufficiently large
\(x\), equivalently for all sufficiently small \(t>0\).
\end{proof}

\begin{proposition}
    \label{prop:high-p-large-k}
For every \(3/4<p<\infty\), there exist \(t\in(0,1)\) and an integer
\(k_0\geq2\) such that \eqref{eq:fixed-k-bell-criterion} holds for all
\(k\geq k_0\). Consequently, finite-dimensional projection-induced
channels violate additivity of the minimum output \(p\)-R\'enyi entropy
for every finite \(p>3/4\).
\end{proposition}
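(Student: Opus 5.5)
Fix $p\in(3/4,\infty)$ and use Lemma~\ref{lem:high-p-threshold} to choose $t\in(0,1)$ with
\[
    \delta:=A_p(t)-\frac{4p(1-t)}{t}>0 .
\]
Keep this $t$ fixed throughout. Since $t>0$, we have $t>1/k^2$ for all $k$ large, so the hypotheses of Corollary~\ref{cor:single-asymptotic}, Corollary~\ref{cor:joint-asymptotic} and Proposition~\ref{prop:fixed-k-bell-criterion} are met for all sufficiently large $k$.

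Combine the two large-$k$ expansions. By \eqref{eq:single-large-k},
\[
    2\min_{\sigma\in\mathscr K_{k,t}}S_p(\sigma)=2\log k-\frac{4p(1-t)}{tk^2}+o(k^{-2}),
\]
and by \eqref{eq:joint-asymptotic},
\[
    S_p(\lambda^{\mathrm{Bell}}_{k,t})=2\log k-\frac{A_p(t)}{k^2}+o(k^{-2}).
\]
Subtracting, the $2\log k$ terms cancel:
\[
    2\min_{\sigma\in\mathscr K_{k,t}}S_p(\sigma)-S_p(\lambda^{\mathrm{Bell}}_{k,t})=\frac{\delta}{k^2}+o(k^{-2}).
\]
Since $t$ is fixed, the two $o(k^{-2})$ remainders are functions of $k$ alone. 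Hence there is $k_0\ge 2$, with $k_0^2t>1$, such that the remainder is below $\delta/(2k^2)$ for $k\ge k_0$. For such $k$ the difference exceeds $\delta/(2k^2)>0$, which is \eqref{eq:fixed-k-bell-criterion}.

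For the consequence, fix any $k\ge k_0$ and a sequence $d_n$ with $d_n/(nk)\to t$; for instance, take $d_n=\lceil tnk\rceil$. By Proposition~\ref{prop:fixed-k-bell-criterion}, almost surely the pair $\Phi_n,\overline{\Phi_n}$ violates additivity for all large $n$. Fix one realization from this probability-one event, and additionally one on which $P_{A,n}>0$ (Lemma~\ref{lem:haar-projection-partial-trace}). Pick a single $n$ from it. Then $\Phi_n=\Phi_{P_n}$ is projection-induced. The conjugate $\overline{\Phi_n}$ equals $\Phi_{\overline{P_n}}$, and $\overline{P_n}$ is again a projection with $\Tr_B\overline{P_n}=\overline{P_{A,n}}>0$, so $\overline{\Phi_n}$ is also projection-induced. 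This gives a finite-dimensional violating pair for every finite $p>3/4$.

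The main obstacle is not in this deduction but in the inputs it relies on, chiefly the large-$k$ asymptotic \eqref{eq:single-large-k} of $\min_{\sigma\in\mathscr K_{k,t}}S_p(\sigma)$ (Appendix~\ref{app:entropy-asymptotics}). Within the present argument, the delicate point is the order of limits: $t$ must be fixed before $k\to\infty$, and $k$ fixed before $n\to\infty$. This order matters because the $o(k^{-2})$ terms are uniform only for fixed $t$, and Proposition~\ref{prop:fixed-k-bell-criterion} requires fixed $k$.
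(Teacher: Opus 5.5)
Your proposal is correct and follows essentially the same route as the paper's proof: fix $t$ via Lemma~\ref{lem:high-p-threshold}, subtract the two large-$k$ expansions to obtain a gap of $\delta/k^2+o(k^{-2})$, enlarge $k_0$ so that $t>1/k^2$, and invoke Proposition~\ref{prop:fixed-k-bell-criterion}. You also check that $\overline{\Phi_n}=\Phi_{\overline{P_n}}$ is itself projection-induced and spell out how to select a single realization; both are correct details that the paper leaves implicit.
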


\begin{proof}
Fix \(p>3/4\), and choose \(t\) as in
Lemma~\ref{lem:high-p-threshold}. Corollaries
\ref{cor:single-asymptotic} and~\ref{cor:joint-asymptotic} give
\begin{align*}
    &S_p(\lambda^{\mathrm{Bell}}_{k,t})
    -2\min_{\sigma\in\mathscr K_{k,t}}S_p(\sigma)
    \\
    &\qquad=
    \frac1{k^2}
    \left[
        \frac{4p(1-t)}{t}-A_p(t)
    \right]
    +o(k^{-2}).
\end{align*}
The coefficient of \(k^{-2}\) is strictly negative. Hence
\eqref{eq:fixed-k-bell-criterion} holds for all sufficiently large
\(k\). Enlarging \(k_0\), if necessary, ensures that \(t>1/k^2\) for
every \(k\geq k_0\). Proposition~\ref{prop:fixed-k-bell-criterion}
then yields finite-dimensional counterexamples.
\end{proof}


\textbf{Finite-output-dimension numerics.}

For \(3/4<p<\infty\), define
\begin{equation}
    k_{\mathrm{high}}(p)
    :=
    \min\left\{
        k\geq2:
        \exists t \in (k^{-2},1),\ s.t., \
        S_p(\lambda^{\mathrm{Bell}}_{k,t})
        <2\displaystyle\min_{\sigma\in\mathscr K_{k,t}}S_p(\sigma)
    \right\}.
    \label{eq:k-high-def}
\end{equation}
This is the smallest output dimension detected by the present ensemble
and Bell-state witness; it is not a universal threshold over all quantum
channels. High-precision numerical optimization gives
\[
    k_{\mathrm{high}}(1)=182.
\]
For comparison, the random-Stinespring/Bell-state criterion of
Belinschi, Collins, and Nechita first detects a violation at
\(k=183\)~\cite{Belinschi_2016}.
\begin{figure}[!t]
\centering

\begin{tikzpicture}
\begin{semilogyaxis}[
    width=0.78\textwidth,
    height=0.48\textwidth,
    xlabel={$p$},
    ylabel={$k_{\mathrm{high}}(p)$},
    title={Minimum output dimension for the high-\(p\) test},
    grid=both,
    ymin=10,
    ymax=2e8,
    xmin=0.75,
    xmax=10.1,
    mark size=1.8pt,
]
\addplot+[mark=*, thick] coordinates {
    (0.77,55480988)
    (0.78,4695946)
    (0.79,960419)
    (0.80,307786)
    (0.82,58741)
    (0.85,10832)
    (0.88,3261)
    (0.90,1718)
    (0.92,987)
    (0.95,483)
    (0.98,261)
    (1.00,182)
    (1.05,93)
    (1.10,62)
    (1.20,39)
    (1.50,22)
    (2.00,16)
    (3.00,13)
    (5.00,12)
    (10.00,12)
};
\end{semilogyaxis}
\end{tikzpicture}

\caption{Numerical values of \(k_{\mathrm{high}}(p)\) with \(k_{\mathrm{high}}(1) = 182\).}
\label{fig:k-high}
\end{figure}


\subsection{The transpose-complement construction: \(p<1/4\)}

\begin{proposition}
    \label{prop:low-p-large-k}
For every \(0\le p<1/4\), there exists \(k_0\geq2\) such that, for every
\(k\geq k_0\), an asymptotically half-rank projection-induced channel
and its transpose-orthogonal partner violate additivity of the minimum
output \(p\)-R\'enyi entropy in all sufficiently large input dimensions.
\end{proposition}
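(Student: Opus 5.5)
We take $t=1/2$, choose $d_n=\lceil nk/2\rceil$ or simply $n$ even with $d_n=nk/2$, and let $P_n$ be Haar random of rank $d_n$ on $\mb C^n\otimes\mb C^k$. We set $Q_n=I_{nk}-P_n^T$ and pair $\Phi_n:=\Phi_{P_n}$ with $\Psi_n:=\Phi_{P_n}^\perp=\Phi_{Q_n}$.

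\textbf{Marginal facts.} Transposition preserves Haar measure, and so does passing to the orthogonal complement. Hence $Q_n$ is itself a Haar random projection of rank $nk-d_n$, with ratio tending to $1/2$. Since $1/2>1/k^2$ for $k\ge2$, Lemma~\ref{lem:haar-projection-partial-trace} gives $P_{A,n}>0$ and $Q_{A,n}>0$ almost surely for large $n$. Both channels are therefore well defined, and Lemma~\ref{lem:rank-deficit} applies. It yields a pure input $\omega_n$ for which the output $(\Phi_n\otimes\Psi_n)(\omega_n)$ has rank at most $k^2-1$. Because $S_p(\rho)\le S_0(\rho)=\log\rank\rho$ for every $p\ge0$, we obtain
\[
S_p^{\min}(\Phi_n\otimes\Psi_n)\le\log(k^2-1)
\]
deterministically on this event. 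Note that no joint-distribution information beyond the exact relation $P_nQ_n^T=0$ is needed.

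\textbf{Case $0<p<1/4$.} Theorem~\ref{thm:output-space-limit} applies separately to $(P_n)$ and to $(Q_n)$, since each marginally satisfies \eqref{eq:projection-density} with $t=1/2$. Corollary~\ref{cor:single-asymptotic} then gives, almost surely,
\[
S_p^{\min}(\Phi_n)+S_p^{\min}(\Psi_n)\to 2\min_{\mathscr K_{k,1/2}}S_p .
\]
Using \eqref{eq:single-large-k} with $t=1/2$, the right-hand side equals $2\log k-4p/k^2+o(k^{-2})$. On the other side, $\log(k^2-1)=2\log k-k^{-2}+O(k^{-4})$. The difference is therefore $(1-4p)k^{-2}+o(k^{-2})$, which is positive for $k\ge k_0(p)$. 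For such fixed $k$ we then take $n$ large, so that the a.s.\ limit exceeds $\log(k^2-1)$ by a fixed margin and strict violation holds for all large $n$.

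\textbf{Case $p=0$.} Here we cannot invoke continuity, and instead need $S_0^{\min}(\Phi_n)=\log k$ eventually. We argue from the support function. If some output $\Phi_n(\rho)$ had a kernel vector $v$, then with $H=-|v\rangle\langle v|$ we would get $h_n(H)=0$. Now $h_{\mathscr K_{k,1/2}}(H)=-\min_{\sigma\in\mathscr K}\langle v|\sigma|v\rangle$. By the operator form \eqref{eq:operator-form-K} every $\sigma\in\mathscr K_{k,1/2}$ is positive definite, for the following reason. If $X$ had eigenvalue $0$, then $c_{1/2}(0)=1/2>1/k$ when $k\ge3$, which violates the constraint. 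By compactness, $\lambda_{\min}(\sigma)\ge\delta_k>0$ uniformly over $\mathscr K_{k,1/2}$. Hausdorff convergence (Theorem~\ref{thm:output-space-limit}) then forces, for large $n$, every element of $\mathscr C_n$ to have $\lambda_{\min}\ge\delta_k/2$. The same holds for $\Psi_n$. Consequently $S_0^{\min}(\Phi_n)+S_0^{\min}(\Psi_n)=2\log k>\log(k^2-1)$ for every $k\ge3$.

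\textbf{Main obstacle.} The delicate input is the large-$k$ expansion \eqref{eq:single-large-k} at $t=1/2$, where the coefficient $2p(1-t)/t=2p$ must be sharp in order to reach the threshold $1/4$. This expansion is assumed from the appendix. Apart from it, the only other point requiring care is the uniform positivity argument at $p=0$, which fixes $k_0=3$ there.
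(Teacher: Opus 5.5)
Your proof is correct. For $0<p<1/4$ it is the paper's argument: $t=1/2$, $Q_n=I_{nk}-P_n^T$ is again Haar with rank ratio tending to $1/2$, Theorem~\ref{thm:output-space-limit} is applied to both sequences, Lemma~\ref{lem:rank-deficit} gives the $\log(k^2-1)$ upper bound, and the $k^{-2}$ coefficients $4p$ and $1$ are compared.

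The two routes differ only at $p=0$. There the written proof simply defers to \cite{Cubitt_2008}. You instead derive full rank of every one-channel output from the limit body:
\begin{itemize}
\item $c_{1/2}(0)=1/2>1/k$ for $k\ge3$, so every element of $\mathscr K_{k,1/2}$ is positive definite;
\item compactness then gives a uniform lower bound $\delta_k$ on $\lambda_{\min}$;
\item trace-norm Hausdorff convergence transfers $\lambda_{\min}\ge\delta_k/2$ to all of $\mathscr C_n$ for large $n$.
\end{itemize}
This is exactly the mechanism sketched in the paper's introduction. Your version therefore makes the proposition self-contained, using only results proved in the paper.

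The Cubitt et al.\ route is different and more elementary. One has $\langle v|\Phi_P(|\psi\rangle\langle\psi|)|v\rangle=\|P(\phi\otimes v)\|^2$ with $\phi=P_A^{-1/2}\bar\psi$. Hence a rank-deficient output exists if and only if $\ker P$ contains a product vector. A generic subspace of dimension $D$ contains no product vector when $D\le(n-1)(k-1)$. Applied to $\ker P_n$ and $\ker Q_n$, both of dimension about $nk/2$, this holds for large $n$ precisely when $k\ge3$.

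That argument needs no strong convergence and gives almost-sure full rank for each explicit finite $n$ satisfying the dimension bound, not merely eventually. It arrives at the same threshold $k_0=3$ as yours.
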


\begin{proof}
Fix \(0<p<1/4\) and set \(t=1/2\). By
Corollary~\ref{cor:single-asymptotic},
\[
    2\min_{\sigma\in\mathscr K_{k,1/2}}S_p(\sigma)
    =
    2\log k-\frac{4p}{k^2}+o(k^{-2}),
\]
whereas
\[
    \log(k^2-1)
    =
    2\log k-\frac1{k^2}+O(k^{-4}).
\]
Because \(1-4p>0\), it follows that
\begin{equation}
    2\min_{\sigma\in\mathscr K_{k,1/2}}S_p(\sigma)
    >
    \log(k^2-1)
    \label{eq:low-p-strict-limit}
\end{equation}
for all sufficiently large \(k\).

Let \(P_n\) be a Haar-random projection of rank
\(d_n=\lfloor nk/2\rfloor\), and put
\[
    Q_n:=I_{nk}-P_n^T.
\]
Almost surely, both \(\Tr_B P_n\) and \(\Tr_B Q_n\) are invertible for
all sufficiently large \(n\). Moreover, transpose and orthogonal
complementation preserve Haar measure on the Grassmannian, so \(Q_n\)
is a Haar-random projection of rank \(nk-d_n\), and
\(\rank(Q_n)/(nk)\to1/2\). Applying
Theorem~\ref{thm:output-space-limit} simultaneously to the two
sequences gives
\[
    S_p^{\min}(\Phi_{P_n})+S_p^{\min}(\Phi_{Q_n})
    \longrightarrow
    2\min_{\sigma\in\mathscr K_{k,1/2}}S_p(\sigma)
\]
almost surely. Lemma~\ref{lem:rank-deficit} supplies a joint input
whose output has rank at most \(k^2-1\). If a state has rank \(r\),
concavity of \(x\mapsto x^p\) gives \(\Tr\rho^p\leq r^{1-p}\), and
hence \(S_p(\rho)\leq\log r\). Therefore,
\[
    S_p^{\min}(\Phi_{P_n}\otimes\Phi_{Q_n})
    \leq\log(k^2-1).
\]
Together with \eqref{eq:low-p-strict-limit}, this proves a strict
violation for every sufficiently large \(n\).

The case \(p=0\) follows from \cite{Cubitt_2008}.

\end{proof}

\textbf{Finite-output-dimension numerics.}

For \(0\leq p<1/4\), define the witness-specific threshold
\begin{equation}
    k_{\mathrm{low}}(p)
    :=
    \min\left\{
        k\geq2:
        2\min_{\sigma\in\mathscr K_{k,1/2}}S_p(\sigma)
        >\log(k^2-1)
    \right\}.
    \label{eq:k-low-def}
\end{equation}
Numerical values are displayed in Figure~\ref{fig:k-low}.

\begin{figure}[!t]
\centering

\begin{tikzpicture}
\begin{semilogyaxis}[
    width=0.78\textwidth,
    height=0.48\textwidth,
    xlabel={$p$},
    ylabel={$k_{\mathrm{low}}(p)$},
    title={Minimum output dimension for the low-\(p\) test},
    grid=both,
    ymin=2,
    ymax=2e7,
    xmin=0,
    xmax=0.25,
    mark size=1.8pt,
]
\addplot+[mark=*, thick] coordinates {
    (0,3)
    (0.02,3)
    (0.05,3)
    (0.08,3)
    (0.10,3)
    (0.12,4)
    (0.15,6)
    (0.18,13)
    (0.20,28)
    (0.22,84)
    (0.23,198)
    (0.24,821)
    (0.245,3345)
    (0.248,21125)
    (0.249,84787)
    (0.2495,339714)
    (0.2498,2125328)
    (0.2499,8504130)
};
\end{semilogyaxis}
\end{tikzpicture}

\caption{Numerical values of \(k_{\mathrm{low}}(p)\). The growth near
\(p=1/4\) is 
\(k_{\mathrm{low}}(p)\sim 1/[12(\frac{1}{4}-p)^2]\).}
\label{fig:k-low}
\end{figure}

\appendix

\section{Free-probability tools for random projection channels}
\label{app:free-probability}

This section collects the free-probability input used in
Lemma~\ref{lem:random-matrix-input} and in the Bell-output calculation.
Throughout, \(k\) is fixed and \(n\to\infty\); the large-\(k\) limit is
taken only in Appendix~\ref{app:entropy-asymptotics}.  We write
\(\tau_N=N^{-1}\Tr\) and \(\operatorname{tr}_k=k^{-1}\Tr\).  For general
background, see Refs.~\cite{MingoSpeicher2017,
CollinsNechita2016Review,aubrun2017alice}.

\subsection{Necessary free-probability basics}\label{app:sub-basics}

For \(X=X^*\in \mb M_N(\mb C)\), its empirical spectral distribution is
\[
    \mu_X:=\frac1N\sum_{j=1}^N\delta_{\lambda_j(X)},
    \qquad
    \tau_N(q(X))=\int q\,d\mu_X.
\]
Convergence in distribution means convergence of these normalized
moments.  Strong convergence additionally requires
\[
    \lVert q(X_N)\rVert
    \longrightarrow
    \max_{x\in\operatorname{supp}(\mu)}|q(x)|
\]
for every polynomial \(q\).  It therefore controls the extreme
eigenvalues and is stable under products and continuous functional
calculus.  Joint strong convergence of a finite family is defined in
the same way, using noncommutative \(*\)-polynomials.

A tracial noncommutative probability space is a unital
\(C^*\)-algebra \(\mathcal A\) with a tracial state \(\tau\).  The law of
a self-adjoint \(x\in\mathcal A\) is characterized by
\(\tau(q(x))=\int q\,d\mu_x\).  Freeness is the analogue of
independence: subalgebras \(\mathcal A_i\) are free if
\(\tau(x_1\cdots x_r)=0\) whenever \(\tau(x_j)=0\),
\(x_j\in\mathcal A_{i_j}\), and neighboring indices differ.  The law
of a sum of free variables with laws \(\mu\) and \(\nu\) is denoted by
\(\mu\boxplus\nu\).  We also write \(D_c\mu\) for the push-forward of
\(\mu\) under \(x\mapsto cx\).

For a compactly supported measure \(\mu\), define
\[
    G_\mu(z):=\int\frac{d\mu(x)}{z-x},
    \qquad
    K_\mu(w):=G_\mu^{\langle-1\rangle}(w)
             =\frac1w+R_\mu(w),
\]
where the inverse is taken near \(w=0\).  The identities used below are
\begin{equation}
    R_{\mu\boxplus\nu}(w)=R_\mu(w)+R_\nu(w),
    \qquad
    R_{D_c\mu}(w)=cR_\mu(cw).
    \label{eq:app-R-identities}
\end{equation}

If \(t_n=d_n/(nk)\), then a rank-\(d_n\) projection satisfies
\[
    \mu_{P_n}=(1-t_n)\delta_0+t_n\delta_1
    \longrightarrow
    b_t:=(1-t)\delta_0+t\delta_1.
\]
This convergence is strong, since the spectrum consists only of
\(0\) and \(1\).  Haar randomness supplies the unitary invariance
needed for the block-modification theorem.

For \(a=(a_1,\ldots,a_k)\in\mb R^k\), put
\[
    A=\operatorname{diag}(a_1,\ldots,a_k),
    \qquad
    \varphi_a(X)=\Tr(AX)I_k.
\]
In the Choi convention used here,
\(C_{\varphi_a}=I_k\otimes A\).  If \(\Pi_\lambda\) denotes the full
spectral projection of \(A\) associated with \(\lambda\), then the
corresponding spectral projection of \(C_{\varphi_a}\) is
\(I_k\otimes\Pi_\lambda\), and
\[
    (\operatorname{id}_k\otimes\Tr)
       (I_k\otimes\Pi_\lambda)
    =\operatorname{rank}(\Pi_\lambda)I_k.
\]
Writing
\[
    P_n=\sum_{i,j=1}^kP_{ij}^{(n)}\otimes E_{ij},
    \qquad
    S_n(a):=\sum_{i=1}^ka_iP_{ii}^{(n)},
\]
we have
\((\operatorname{id}_n\otimes\varphi_a)(P_n)=S_n(a)\otimes I_k\).
The block-modification theorem
\cite[Theorem~5.1]{ArizmendiNechitaVargas2016} and its strong version
\cite[Theorem~5.2]{Nechita2018} therefore give, for every fixed \(a\),
\begin{equation}
    \mu_{a,t}
    :=\boxplus_{i=1}^k
       \left(D_{a_i/k}b_t\right)^{\boxplus k},
    \qquad
    \lambda_{\max}(S_n(a))
    \longrightarrow
    \max\operatorname{supp}(\mu_{a,t})
    \quad\text{a.s.}
    \label{eq:app-compression-law}
\end{equation}

\subsection{The spectral-edge calculation}
\label{app:proof-random-compression}

Inverting the Cauchy transform of \(b_t\) gives
\begin{equation}
    R_t(w):=R_{b_t}(w)
    =\frac{w-1+\sqrt{(1-w)^2+4tw}}{2w},
    \qquad R_t(0)=t.
    \label{eq:app-bernoulli-R}
\end{equation}
Hence
\begin{equation}
    R_{\mu_{a,t}}(w)
    =\sum_{i=1}^k a_iR_t(a_iw/k).
    \label{eq:app-R-mu-a-t}
\end{equation}
Recall \(c_t\) and \(\mathscr D_{k,t}\) from \eqref{eq:operator-form-K}, and define
\[
    g_t(y):=yR_t(y)
    =\frac{y-1+\sqrt{(1-y)^2+4ty}}2.
\]
A direct calculation gives the Legendre identity
\begin{equation}
    g_t(y)=\sup_{0\leq u\leq1}\{yu-c_t(u)\},
    \qquad
    u_t(y):=g_t'(y)
    =\frac12\left(
       1+\frac{y+2t-1}{\sqrt{(1-y)^2+4ty}}
      \right).
    \label{eq:app-legendre-pair}
\end{equation}

\begin{lemma}[Bernoulli edge and convex duality]
\label{lem:bernoulli-edge-duality}
For every \(a\in\mb R^k\),
\begin{equation}
    \max\operatorname{supp}(\mu_{a,t})
    =\max_{u\in\mathscr D_{k,t}}
      \sum_{i=1}^k a_i u_i.
    \label{eq:app-edge-support-duality}
\end{equation}
\end{lemma}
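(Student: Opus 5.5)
We compute the right edge of $\mu_{a,t}$ with the standard subordination/edge principle for free additive convolutions expressed through the $K$-transform. By \eqref{eq:app-R-mu-a-t},
\[
    K_{\mu_{a,t}}(w)=\frac1w+\sum_{i=1}^k a_iR_t(a_iw/k)
    =\frac1w+\frac{k}{w}\sum_{i=1}^k g_t\!\left(\frac{a_iw}{k}\right),
\]
for $w$ near $0$.

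For a compactly supported measure, $G_\mu$ is a decreasing bijection from $(\max\operatorname{supp}\mu,\infty)$ onto $(0,G_\mu(\max\operatorname{supp}\mu^+))$. Its inverse $K_\mu$ extends analytically along $(0,w_*)$, where $w_*$ is the first critical point of $K_\mu$ on $(0,\infty)$. The edge is then
\[
    \max\operatorname{supp}\mu=\min_{w>0}K_\mu(w)=K_\mu(w_*),
\]
provided the minimum is attained at an interior critical point (for Bernoulli-type convolutions the density vanishes like a square root at the edge, so this holds).

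Concretely, I would use the variational formula for free convolutions with bounded support, $\max\operatorname{supp}(\boxplus_i\nu_i)=\inf_{w>0}\bigl[\frac1w+\sum_iR_{\nu_i}(w)\bigr]$ over the domain where each $R_{\nu_i}$ is analytic. Here $g_t$ is real-analytic on $[0,\infty)$ because $(1-y)^2+4ty>0$ for $t\in(0,1)$. I would justify the formula either from the strong convergence already invoked in \eqref{eq:app-compression-law} or, self-containedly, via the monotonicity of $K$ on the real axis as above. The case of negative $a_i$ needs $g_t$ on $y<0$ as well; it remains analytic there since the discriminant is positive for $y<0$.

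Next I would substitute $s=w/k>0$, which gives
\[
    \max\operatorname{supp}(\mu_{a,t})=\inf_{s>0}\frac1s\Bigl[\frac1k+\sum_i g_t(a_is)\Bigr].
\]
Using the Legendre identity \eqref{eq:app-legendre-pair}, $g_t(a_is)=\sup_{u_i\in[0,1]}\{a_isu_i-c_t(u_i)\}$, this becomes
\[
    \inf_{s>0}\ \sup_{u\in[0,1]^k}\Bigl[\sum_ia_iu_i+\frac{1}{s}\Bigl(\frac1k-\sum_ic_t(u_i)\Bigr)\Bigr].
\]
This is the Lagrangian dual, with multiplier $\lambda=1/s$, of the convex program that maximizes the linear function $\sum_i a_iu_i$ over $u\in[0,1]^k$ subject to $\sum_ic_t(u_i)\le 1/k$. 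The constraint function is convex because $c_t$ is convex on $[0,1]$: writing $c_t(u)=1-2\sqrt{t(1-t)u(1-u)}-\dots$, it equals $t+u-2tu-2\sqrt{t(1-t)}\sqrt{u(1-u)}$, and $\sqrt{u(1-u)}$ is concave.

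Slater's condition holds because $u\equiv t$ gives $c_t=0<1/k$. Hence strong duality and minimax over the compact set $[0,1]^k$ apply, and the infimum over $\lambda>0$ of the dual equals $\max_{u\in\mathscr D_{k,t}}\sum_ia_iu_i$. The boundary case $\lambda\to0$ corresponds to an unconstrained value, which is never smaller, so restricting to $\lambda>0$ loses nothing. The case $a=0$ is trivial.

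The main obstacle is the first step: rigorously identifying the edge with $\inf_{w>0}K_\mu(w)$. One must show that the real branch of $K$ defined through $R_t$ agrees with the inverse of $G_\mu$ up to the edge, and that no atom or gap makes the edge smaller. I would handle this by noting that $\mu_{a,t}$ has no atom at its right edge for generic $a$, then extending to all $a$ by the Lipschitz continuity used in Lemma~\ref{lem:random-matrix-input}. For the monotonicity argument, I would show that $K$ is convex on $(0,\infty)$: $g_t$ is convex, and $\frac1s[\frac1k+\sum g_t(a_is)]$ is the perspective-type infimal expression. A convex $K$ has a unique minimizer $w_*$, and $G_\mu=K^{-1}$ on $(K(w_*),\infty)$.
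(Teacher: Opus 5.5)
Your route is the paper's. You identify $\max\operatorname{supp}(\mu_{a,t})$ with $\inf_{w>0}K_a(w)$, where $K_a(w)=w^{-1}\bigl(1+k\sum_ig_t(a_iw/k)\bigr)$. You then read that infimum as the Lagrange dual of $\max_{u\in\mathscr D_{k,t}}\sum_ia_iu_i$ with multiplier $\lambda=1/w$, using the Legendre pair $(g_t,c_t)$ and Slater's condition at $u\equiv t$. The duality half is fine.

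The gap is the edge identification. You flag it as the main obstacle but do not close it, and the patches you propose rest on false claims.

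(i) The infimum need not be attained at an interior critical point. Take $t>1-1/k^2$ and any $a$ with positive entries. Setting $u=1$ in the Legendre formula gives $g_t(y)\ge y-(1-t)$, hence $K_a(w)\ge\sum_ia_i+\bigl(1-k^2(1-t)\bigr)/w>\sum_ia_i=\lim_{w\to\infty}K_a(w)$. Correspondingly, $\mu_{a,t}$ has an atom of mass $k^2t-k^2+1$ at its edge $\sum_ia_i$, so the edge is not a square-root edge.

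(ii) For such $t$ this happens on an open set of $a$. So atoms at the edge are not non-generic, and there is no dense set of $a$ from which to extend by Lipschitz continuity.

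(iii) $K_a$ is not convex in $w$. Whenever it has an interior minimizer, it increases afterwards to the finite limit $\sum_i(a_i)_+$, which a convex function cannot do. For example, with $k=2$, $a=(1,1)$, $t=1/2$, one gets $K_a(w)=1+(\sqrt{4+w^2}-1)/w$, which is concave for large $w$. What is true is that $\lambda\mapsto K_a(1/\lambda)$ is convex (it is your dual function). Equivalently, $K_a'(w)=\bigl(k\sum_ic_t(u_i(w))-1\bigr)/w^2$ with $u_i(w)=u_t(a_iw/k)$, and the numerator is nondecreasing in $w$. So $K_a$ is only unimodal.

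Appealing to strong convergence does not help either. It transfers the edge to matrices but does not compute it.

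The missing step is to show that $G=G_{\mu_{a,t}}$ inverts $K_a$ all the way down to the first critical point of $K_a$, or out to $w=\infty$. Put $r=\max\operatorname{supp}\mu_{a,t}$.

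The explicit $K_a$ is real-analytic on $(0,\infty)$ and inverts $G$ near $w=0$. So the identity theorem gives $K_a(G(x))=x$ on all of $(r,\infty)$. This yields $r\ge\inf K_a$ at once, and also $K_a'(G(x))=1/G'(x)<0$.

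Now set $w_*=\lim_{x\downarrow r}G(x)\in(0,\infty]$. Suppose $w_*<\infty$ and $K_a'(w_*)\neq0$. The holomorphic inverse function theorem then gives an inverse of $K_a$ near $r$ that is real on the real axis and agrees with $G$ on $(r,r+\varepsilon)$. So $G$ continues analytically across $r$ with vanishing imaginary part, and Stieltjes inversion shows $\mu_{a,t}$ has no mass near $r$, a contradiction. Hence a finite $w_*$ is the first critical point, and unimodality gives $r=K_a(w_*)=\inf K_a$.

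If $w_*=\infty$ (the atom case in (i)), $K_a$ decreases on all of $(0,\infty)$, and $r=\lim_{w\to\infty}K_a(w)=\inf K_a$. This covers every $a$ and every $t$ with no genericity or edge-regularity assumption, and it is how the paper closes this step.
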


\begin{proof}
The inverse Cauchy transform around \(w=0\) is
\begin{equation}
    K_a(w)
    =\frac{1+k\sum_i g_t(a_iw/k)}{w},
    \qquad w>0.
    \label{eq:app-K-a}
\end{equation}
The explicit expression is real analytic for \(w>0\), since the
discriminant in \(g_t\) is strictly positive on the real line.
With \(u_i(w)=u_t(a_iw/k)\), Legendre duality yields
\begin{equation}
    K_a'(w)
    =\frac{k\sum_i c_t(u_i(w))-1}{w^2}.
    \label{eq:app-K-a-derivative}
\end{equation}
Moreover,
\[
    \frac d{dw}c_t(u_i(w))
    =\frac{a_i^2w}{k^2}g_t''(a_iw/k)\geq0.
\]
Thus \(K_a\) decreases until its first critical point and increases
afterward, with a limiting minimum at \(w=\infty\) if no critical point
exists.

Let \(r=\max\operatorname{supp}(\mu_{a,t})\) and
\(w_*:=\lim_{x\downarrow r}G_{\mu_{a,t}}(x)\in(0,\infty]\).  For
\(x>r\), \(K_a(G_{\mu_{a,t}}(x))=x\).  If \(w_*<\infty\) and
\(K_a'(w_*)\neq0\), the analytic inverse-function theorem would
continue the resolvent through \(r\), contradicting
\(r\in\operatorname{supp}(\mu_{a,t})\).  Hence the finite endpoint is
critical.  Moreover, for \(x>r\),
\(K_a'(G_{\mu_{a,t}}(x))=1/G_{\mu_{a,t}}'(x)<0\), so \(w_*\) is the
first critical point.  If \(w_*=\infty\), the endpoint is
\(\lim_{w\to\infty}K_a(w)\).  Consequently,
\begin{equation}
    \max\operatorname{supp}(\mu_{a,t})
    =\inf_{w>0}K_a(w).
    \label{eq:app-right-edge}
\end{equation}

Since \((t,\ldots,t)\) is strictly feasible, Slater duality and
\eqref{eq:app-legendre-pair} give
\begin{align*}
    \max_{u\in\mathscr D_{k,t}}\sum_i a_i u_i
    &=\inf_{\lambda>0}
      \left\{
        \lambda+\lambda k\sum_i
        g_t\!\left(\frac{a_i}{\lambda k}\right)
      \right\}  \\
    &=\inf_{w>0}K_a(w),
\end{align*}
where \(w=\lambda^{-1}\) and the endpoint \(\lambda=0\) is included
by a limit.  Combining this identity with
\eqref{eq:app-right-edge} proves the lemma.
\end{proof}

\subsection{The asymptotic evaluation for Choi matrix}

We first record the local-normalization consequence needed below.

\begin{proposition}[Strong convergence after local normalization]
\label{prop:normalized-choi-block-limit}
Assume \(t>1/k^2\).  Almost surely, the block family
\((P_{ij}^{(n)})_{i,j=1}^k\) converges jointly strongly to a family
\((p_{ij})_{i,j=1}^k\) in a tracial \(C^*\)-probability space
\((\mathcal A,\tau)\).  If
\[
    P_{A,n}=\sum_iP_{ii}^{(n)},
    \qquad
    p_A=\sum_i p_{ii},
\]
then \(p_A\geq m_{k,t}I>0\), where
\(m_{k,t}=\min_{u\in\mathscr D_{k,t}}\sum_i u_i\), and
\[
    J_{ij}^{(n)}
      :=P_{A,n}^{-1/2}P_{ij}^{(n)}P_{A,n}^{-1/2}
    \longrightarrow
    j_{ij}:=p_A^{-1/2}p_{ij}p_A^{-1/2}
\]
jointly strongly.
\end{proposition}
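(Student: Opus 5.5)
The argument has three steps: joint strong convergence of the blocks, a uniform lower bound for the limiting marginal $p_A$, and transfer of convergence through the continuous map $x\mapsto x^{-1/2}$.

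\emph{Step 1: joint strong convergence of the blocks.} Write $P_n=U_n(P_0^{(n)})U_n^*$ with $U_n$ Haar on $\mb C^{nk}$. The blocks are $P_{ij}^{(n)}=(I_n\otimes\langle i|)P_n(I_n\otimes|j\rangle)$, i.e.\ compressions of $P_n$ by the fixed partial isometries $I_n\otimes|i\rangle$. In $\mb M_{nk}$, the matrix units $I_n\otimes E_{ij}$ form a deterministic family with a trivial limit. The Haar-conjugated projection $P_n$ converges strongly to $b_t$. By the Collins--Male strong asymptotic freeness theorem \cite{CollinsMale2014}, the family $\{P_n,\ I_n\otimes E_{ij}\}$ converges jointly strongly to $\{p,\ e_{ij}\}$, where $p$ is a projection of trace $t$ free from the matrix units. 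Strong convergence of a $*$-polynomial family is preserved under compression by the projection $I_n\otimes E_{11}$: after identifying the corner algebra $(I_n\otimes E_{11})\mb M_{nk}(I_n\otimes E_{11})\cong\mb M_n$, norms of polynomials in the corner elements $e_{1i}pe_{j1}$ are norms in the big algebra. This gives joint strong convergence of $(P^{(n)}_{ij})$ to $p_{ij}:=e_{1i}pe_{j1}$ in the compressed space with renormalized trace $k\tau$.

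\emph{Step 2: $p_A\geq m_{k,t}I$.} Since $P_{A,n}=S_n(1,\ldots,1)$, strong convergence gives
\[
\lambda_{\min}(P_{A,n})\longrightarrow\min\operatorname{spec}(p_A).
\]
Apply Lemma~\ref{lem:random-matrix-input} with $a=(-1,\ldots,-1)$:
\[
\lambda_{\max}(-P_{A,n})\longrightarrow\max_{u\in\mathscr D_{k,t}}\Bigl(-\sum_iu_i\Bigr)=-m_{k,t}.
\]
Hence $\min\operatorname{spec}(p_A)=m_{k,t}$.

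It remains to show $m_{k,t}>0$ when $t>1/k^2$. Suppose $\sum_iu_i=0$, so $u=0$. Then the constraint reads
\[
\sum_i c_t(0)=kt\le \frac1k,\quad\text{i.e. } t\le 1/k^2 .
\]
This contradicts $t>1/k^2$. Since $\mathscr D_{k,t}$ is compact, the minimum of $\sum_iu_i$ is attained and is therefore strictly positive.

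\emph{Step 3: passing through $x^{-1/2}$.} Fix $\epsilon<m_{k,t}/2$. Almost surely, for large $n$, the spectrum of $P_{A,n}$ lies in $[m_{k,t}-\epsilon,\,k]$. The function $x\mapsto x^{-1/2}$ is continuous on this interval, so by Weierstrass it is uniformly approximated there by polynomials $q_\delta$. Then
\[
\|P_{A,n}^{-1/2}-q_\delta(P_{A,n})\|\le\delta,\qquad \|p_A^{-1/2}-q_\delta(p_A)\|\le\delta .
\]
For any $*$-polynomial $F$ in the variables $J_{ij}$, replace each $P_{A,n}^{-1/2}$ by $q_\delta(P_{A,n})$. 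Since all operators are uniformly bounded, this perturbs $\|F(J^{(n)})\|$ and $\tau_n(F(J^{(n)}))$ by $O_F(\delta)$, and likewise on the limit side. The approximated expressions are $*$-polynomials in the $P^{(n)}_{ij}$, so they converge in norm and in trace by Step 1. Letting $n\to\infty$ and then $\delta\to0$ yields joint strong convergence $J^{(n)}_{ij}\to j_{ij}$.

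The main obstacle is Step 1: justifying that compressions by the deterministic matrix units preserve joint strong convergence, including the trace normalization on the corner. If invoking \cite{CollinsMale2014} directly is delicate, the fallback is the strong block-modification framework of \cite{Nechita2018}, which handles such block families of unitarily invariant matrices.
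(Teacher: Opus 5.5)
Your proposal is correct and follows the paper's proof essentially step for step. You use Collins--Male strong asymptotic freeness of $P_n$ from the matrix units $I_n\otimes E_{ij}$ and pass to the $e_{11}$-corner with trace $k\tau$; you then apply the random-compression lemma at $a=-(1,\ldots,1)$ together with $0\notin\mathscr D_{k,t}$ (i.e.\ $kt>1/k$) to get $p_A\geq m_{k,t}I>0$; finally your Weierstrass-approximation argument for $x\mapsto x^{-1/2}$ just makes explicit what the paper compresses into the phrase ``continuous functional calculus.''
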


\begin{proof}
Strong asymptotic freeness
\cite[Theorem~1.4]{CollinsMale2014} gives a joint strong limit of
\(P_n\) and the deterministic matrix units \(I_n\otimes E_{ij}\), in
which the limiting projection \(p\) is free from \(M_k(\mb C)\).
Passing to the \(e_{11}\)-corner, with
\(\tau(x)=k\widetilde\tau(x)\), identifies
\(p_{ij}=e_{1i}pe_{j1}\); at finite \(n\), this corner trace is
\(n^{-1}\Tr\).  This gives the asserted block convergence.  Applying the
spectral-edge result to \(-P_{A,n}=-S_n(\mathbf 1)\) gives
\[
    \lambda_{\min}(P_{A,n})
    \longrightarrow
    \min_{u\in\mathscr D_{k,t}}\sum_i u_i=m_{k,t}>0.
\]
Here positivity follows from
\(0\notin\mathscr D_{k,t}\), equivalently \(t>1/k^2\).
Continuous functional calculus now gives
\(P_{A,n}^{-1/2}\to p_A^{-1/2}\) strongly, proving the result.
\end{proof}

\begin{proposition}[Asymptotic second Choi moment]
\label{prop:asymptotic-choi-purity}
Under the assumptions of
Proposition~\ref{prop:normalized-choi-block-limit}, define
\[
    J_n=(P_{A,n}^{-1/2}\otimes I_k)P_n
        (P_{A,n}^{-1/2}\otimes I_k),
\]
Then, almost surely,
\begin{equation}
    \frac1{nk}\Tr(J_n^2)
    \longrightarrow
    \frac{k^4-(1+t)k^2+1}
         {k^2(k^4t-2k^2t+1)}.
    \label{eq:app-normalized-choi-purity}
\end{equation}
\end{proposition}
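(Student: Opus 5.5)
The plan is to reduce the normalized purity to a trace in the limiting algebra, and then evaluate that trace with operator-valued subordination for free additive convolution.

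\textbf{Reduction to the limit algebra.} By Proposition~\ref{prop:normalized-choi-block-limit}, $\frac1{nk}\Tr(J_n^2)=\frac1{k}\sum_{i,j}\frac1n\Tr\bigl(J^{(n)}_{ij}J^{(n)}_{ji}\bigr)$ converges almost surely to
\[
\frac1k\sum_{i,j=1}^k\tau\bigl(p_A^{-1}p_{ij}\,p_A^{-1}p_{ji}\bigr).
\]
Trace convergence suffices here, since it follows from strong convergence, and $p_A^{-1}$ is a continuous function of $p_A$ on the spectrum by the lower bound $p_A\ge m_{k,t}$. Passing back to the amplified algebra $M_k(\mathcal A)$, with normalized trace $\widetilde\tau$, this equals $k\,\widetilde\tau(XpXp)$, where $X=p_A^{-1}\otimes I_k$.

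\textbf{Rewriting $p_A\otimes I_k$ as a twirl.} Let $W_1=I,\dots,W_{k^2}$ be the Weyl--Heisenberg unitaries of $M_k$. The twirl identity $\frac1{k^2}\sum_a W_aYW_a^*=\operatorname{tr}_k(Y)\otimes I_k$ gives
\[
p_A\otimes I_k=\frac1k\sum_{a=1}^{k^2}q_a,\qquad q_a:=W_apW_a^*,
\]
where each $q_a$ is a projection of trace $t$. I expect, and would check using the fact that $p$ is free from $M_k(\mathbb C)$ together with the block-modification theorem, that $q_1,\dots,q_{k^2}$ are free. This is consistent with Lemma~\ref{lem:random-matrix-input} at $a=\mathbf 1$, which identifies the law of $p_A$ as $(D_{1/k}b_t)^{\boxplus k^2}$.

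\textbf{Reduction to a mixed moment.} Put $s=\frac1k\sum_a q_a$. The target becomes $k\,\tau(s^{-1}q\,s^{-1}q)$ with $q=q_1$, and we have $s=q/k+r$, where $r=\frac1k\sum_{a\ge2}q_a$ is free from $q$. I would compute the mixed moment by differentiation:
\[
\tau(s^{-1}qs^{-1}q)=-\frac{d}{d\varepsilon}\Big|_{\varepsilon=0}\tau\bigl(q(s+\varepsilon q)^{-1}\bigr).
\]
The trace $\tau\bigl(q((1/k+\varepsilon)q+r)^{-1}\bigr)$ is evaluated by subordination: $E_q[(z-cq-r)^{-1}]=(\omega_c(z)-cq)^{-1}$, where $\omega_c$ is determined by $R$-transforms. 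Here $R_r(w)=(k^2-1)\frac1kR_t(w/k)$, with $R_t$ from \eqref{eq:app-bernoulli-R}, and $cq$ has an explicit Bernoulli Cauchy transform. We then evaluate at $z=0$, which lies outside the spectrum because $s\ge m_{k,t}>0$. Since $q$ is a projection, $\tau\bigl(q(\omega-cq)^{-1}\bigr)=t/(\omega-c)$, so everything reduces to the scalar $\omega_c(0)$ and its $\varepsilon$-derivative.

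\textbf{Main obstacle.} The hardest part is making this subordination calculation rigorous and closed-form, that is, solving the algebraic equation for $\omega_c(0)$ and differentiating it implicitly. I would first confirm the freeness of the $q_a$ cleanly; otherwise I would fall back on an operator-valued ($M_k$-amalgamated) computation with $p$ free from $M_k$. I would then check the resulting rational function against the claimed value $\frac{k^4-(1+t)k^2+1}{k^2(k^4t-2k^2t+1)}$. Two sanity checks are available: at $t=1$, where $J_n=I/k$ and the purity should be $1/k^2$; and the constraint that the answer, as $\alpha_{k,t}$, yields an isotropic state with nonnegative $\beta_{k,t}$ for $t>1/k^2$.
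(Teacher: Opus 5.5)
Your route is correct and genuinely different from the paper's, once you fix one normalization slip. With $\widetilde\tau=\tau\otimes\operatorname{tr}_k$ one has $\widetilde\tau(XpXp)=\frac1k\sum_{i,j}\tau(p_A^{-1}p_{ij}p_A^{-1}p_{ji})$, so the limit is $\widetilde\tau(s^{-1}qs^{-1}q)$ with no factor $k$. Your own check at $t=1$ would otherwise return $1/k$ instead of $1/k^2$.

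The freeness you anticipate holds and needs no block-modification theorem. For centered $y_j\in W^*(p)$ and $a_j\neq a_{j+1}$, traciality gives
\[
\widetilde\tau\bigl(W_{a_1}y_1W_{a_1}^*\cdots W_{a_m}y_mW_{a_m}^*\bigr)
=\widetilde\tau\bigl(c_0\,y_1c_1y_2\cdots c_{m-1}y_m\bigr),
\qquad c_j=W_{a_j}^*W_{a_{j+1}},\quad c_0=W_{a_m}^*W_{a_1}.
\]
Since $\operatorname{tr}_k(W_a^*W_b)=\delta_{ab}$, every $c_j$ with $j\ge1$ is centered, and $c_0$ is either centered or equal to $I$. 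Either way the word is alternating and centered in two free algebras, so its trace is zero.

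The scalar computation also closes. Write $\omega=\omega_c(0)$ and $G=G_{s_c}(0)$. Subordination then reads $\omega=-R_r(G)$ and $G=\frac{t}{\omega-c}+\frac{1-t}{\omega}$, and the target is $A(1-\partial_c\omega)$ with $A=t/(\omega-c)^2$. At $c=1/k$, the condition $g_t(G/k)=-k^{-2}$ forces $G/k=\frac{k^2-1}{k^2(1-tk^2)}$. Hence $\omega=\frac{1-tk^2}{k}$, $\omega-\frac1k=-tk$ and $A=\frac1{tk^2}$. Implicit differentiation gives the target as $A(1-\rho B)/(1-\rho(A+B))$, with $B=(1-t)/\omega^2$ and $\rho=R_r'(G)$. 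Computing $R_t'(G/k)$ from $g_t^2+(1-y)g_t-ty=0$ yields
\[
1-\rho B=\frac{(tk^2-1)\bigl(k^4-(1+t)k^2+1\bigr)}{(k^2-1)(k^4t-2k^2t+1)},
\qquad
1-\rho(A+B)=\frac{tk^2-1}{t(k^2-1)},
\]
and this is exactly the claimed value. What remains is to continue the subordination relation and the explicit $R_r$ from near $w=0$ to the real point $G<0$. This is routine: $0$ lies strictly to the left of $\operatorname{spec}(s_c)$ for $c$ near $1/k$, and the discriminant in $R_t$ is positive on $\mathbb R$. The paper needs the same continuation.

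The paper never splits $s$ into free pieces. It regards $X\mapsto\tau(\log s_X)$, with $s_X=\sum_{i,j}X_{ji}p_{ij}$, as a function near $I_k$, and identifies the desired sum as a contraction of its Hessian. It then uses $\mathcal U(k)$-invariance and homogeneity to reduce that Hessian to a single unknown constant $c_1$. Finally it obtains $c_1$ by integrating the second variation of $R_{s_X}$ from $w_0=G_s(0)$ to $0$ through a log-determinant integral formula. Your Weyl twirl replaces this matrix-valued perturbation by a rank-one scalar one. It costs a short freeness lemma, but it removes the invariant-theory and log-integral steps. What is left is one algebraic subordination equation, solved in closed form, and one implicit differentiation, which is more elementary and makes the rational form of the answer transparent. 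The paper's route, in exchange, uses only the block-modification formula for $R_{s_X}$, which it already needs for Lemma~\ref{lem:random-matrix-input}.
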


\begin{proof}
Proposition~\ref{prop:normalized-choi-block-limit} gives
\begin{equation}
    \frac1{nk}\Tr(J_n^2)
    \longrightarrow
    \frac1k\sum_{i,j=1}^k\tau(j_{ij}j_{ji}).
    \label{eq:app-def-Qkt}
\end{equation}
Write \(p=\sum_{i,j}p_{ij}\otimes E_{ij}\) for the limiting
projection.  Define, for Hermitian \(X\) near \(I_k\),
\[
    s_X:=\sum_{i,j}X_{ji}p_{ij},
    \qquad s:=s_{I_k}=p_A,
    \qquad \mathcal F(X):=\tau(\log s_X).
\]
For \(X\geq0\), partial-trace cyclicity gives
\[
    s_X=(\operatorname{id}\otimes\Tr)
       [(I\otimes X^{1/2})p(I\otimes X^{1/2})]\geq0.
\]
The lower bound on \(s\) therefore makes \(\mathcal F\) well defined
in a neighborhood of \(I_k\).
If \(a_1,\ldots,a_k\) are the eigenvalues of \(X\), unitary
invariance and \eqref{eq:app-R-mu-a-t} give
\begin{equation}
    R_{s_X}(w)
    =\sum_{\ell=1}^k
      a_\ell R_t(a_\ell w/k).
    \label{eq:app-R-sX}
\end{equation}
Initially valid near \(w=0\), this identity continues along the
negative inverse-Cauchy branch used below.

For Hermitian \(H,K\),
\begin{equation}
    B(H,K):=D^2\mathcal F(I_k)[H,K]
    =-\tau(s^{-1}s_Hs^{-1}s_K).
    \label{eq:app-log-hessian}
\end{equation}
Unitary invariance implies
\begin{equation}
    B(H,K)=c_0\Tr(H)\Tr(K)+c_1\Tr(HK),
    \qquad
    c_0k^2+c_1k=-1,
    \label{eq:app-invariant-hessian}
\end{equation}
where the second identity follows from \(s_{xI_k}=xs\).

To find \(c_1\), take \(\Tr H=0\), set
\(X_\varepsilon=I_k+\varepsilon H\), and write
\(R_\varepsilon=R_{s_{X_\varepsilon}}\).  Differentiating
\eqref{eq:app-R-sX} twice gives
\begin{equation}
    \left.\partial_\varepsilon^2R_\varepsilon(w)\right|_{0}
    =\Tr(H^2)\left[
       2\frac wkR_t'(w/k)+(w/k)^2R_t''(w/k)
      \right].
    \label{eq:app-second-R-variation}
\end{equation}
Let \(G_\varepsilon\) be the Cauchy transform of
\(s_{X_\varepsilon}\), and set
\[
    K_\varepsilon(w)=w^{-1}+R_\varepsilon(w),
    \qquad
    w_0:=G_0(0)=-\tau(s^{-1})<0.
\]
Implicitly differentiating
\(K_\varepsilon(G_\varepsilon(z))=z\), and then using
\[
    \tau(\log x)=\int_0^\infty
      \left[\frac1{1+r}+G_x(-r)\right]dr,
\]
with the change of variables \(w=G_0(-r)\), yields
\begin{equation}
    \left.\frac{d^2}{d\varepsilon^2}
       \tau(\log s_{X_\varepsilon})\right|_0
    =\int_{w_0}^0
      \left.\partial_\varepsilon^2R_\varepsilon(w)\right|_0dw.
    \label{eq:app-logdet-R-variation}
\end{equation}
The uniform lower spectral bound and the resolvent identity justify
differentiation by dominated convergence.  With \(y_0=w_0/k\),
equations \eqref{eq:app-second-R-variation} and
\eqref{eq:app-logdet-R-variation} give
\begin{equation}
    c_1=-k y_0^2R_t'(y_0).
    \label{eq:app-c1-intermediate}
\end{equation}

Since \(0\) lies to the left of \(\operatorname{supp}(s)\),
\(K_0(w_0)=0\), or equivalently
\(g_t(y_0)=-k^{-2}\).  From
\[
    g_t(y)^2+(1-y)g_t(y)-ty=0
\]
and its derivative, one obtains
\[
    y^2R_t'(y)
    =\frac{g_t(y)^2(1-t)}
           {g_t(y)^2+2tg_t(y)+t}.
\]
Consequently,
\[
    c_1=-\frac{k(1-t)}{k^4t-2k^2t+1}.
\]
Finally, complex polarization of \eqref{eq:app-log-hessian} and
\(s_{E_{ji}}=p_{ij}\) give
\[
    \frac1k\sum_{i,j=1}^k\tau(j_{ij}j_{ji})
    =-\frac1k\sum_{i,j}B(E_{ji},E_{ij})
    =-c_0-kc_1
    =\frac{k^4-(1+t)k^2+1}
           {k^2(k^4t-2k^2t+1)}.
\]
Together with \eqref{eq:app-def-Qkt}, this proves the proposition.
\end{proof}
\section{Entropy asymptotics}
\label{app:entropy-asymptotics}

We record only the two large-\(k\) estimates used in the proof of the main
theorem.  Throughout, \(t\in(0,1)\) and \(p\in(0,\infty)\) are fixed,
and \(k\to\infty\).
The case \(p=0\) is treated directly by output ranks and requires no
asymptotic expansion.

\begin{proposition}[Entropy of the one-channel output body]
\label{prop:app-single-entropy-asymptotics}
For every fixed \(t\in(0,1)\) and \(p\in(0,\infty)\),
\begin{equation}
    \min_{\sigma\in\mathscr K_{k,t}}S_p(\sigma)
    =
    \log k-\frac{2p(1-t)}{t k^2}+o(k^{-2}).
    \label{eq:app-single-entropy-asymptotics}
\end{equation}
\end{proposition}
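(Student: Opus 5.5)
The plan is a second-order perturbation argument around the maximally mixed state, using the explicit description \eqref{eq:def-Lambda-kt}. By unitary invariance of \(S_p\), \(\min_{\sigma\in\mathscr K_{k,t}}S_p(\sigma)=\min_{\lambda\in\Lambda_{k,t}}S_p(\lambda)\). So we minimize \(S_p(u/\sum_iu_i)\) over \(u\in\mathscr D_{k,t}\).

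\textbf{Step 1: localization.} The function \(c_t(u)=(\sqrt{t(1-u)}-\sqrt{u(1-t)})^2\) is continuous on \([0,1]\) and vanishes only at \(u=t\). Writing \(f(u)=\sqrt{t(1-u)}-\sqrt{u(1-t)}\), we have \(f(t)=0\) and \(f'(t)=-1/(2\sqrt{t(1-t)})\). Hence
\[
c_t(u)=\frac{(u-t)^2}{4t(1-t)}\bigl(1+O(|u-t|)\bigr)
\]
near \(t\), and \(c_t(u)\ge \kappa_t (u-t)^2\) on all of \([0,1]\) for some \(\kappa_t>0\).

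Every \(u\in\mathscr D_{k,t}\) satisfies \(c_t(u_i)\le 1/k\) for each \(i\). Therefore \(\max_i|u_i-t|\le C_t k^{-1/2}\) uniformly on \(\mathscr D_{k,t}\). Writing \(u_i=t(1+x_i)\), the constraint becomes
\[
\sum_i x_i^2\le \frac{4(1-t)}{tk}\bigl(1+O(k^{-1/2})\bigr),
\qquad \max_i|x_i|=O(k^{-1/2}).
\]

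\textbf{Step 2: Taylor expansion of the entropy.} Normalize to \(\lambda_i=(1+y_i)/k\), where \(y_i=(x_i-\bar x)/(1+\bar x)\), so that \(\sum_i y_i=0\) and \(\max_i|y_i|=O(k^{-1/2})\). Expanding \((1+y)^p\) to second order with a cubic remainder gives
\[
\Tr\lambda^p=k^{1-p}\Bigl(1+\tfrac{p(p-1)}{2}\,\overline{y^2}+O(\overline{|y|^3})\Bigr),
\qquad \overline{y^2}:=\frac1k\sum_i y_i^2 .
\]
The remainder is small: \(\overline{|y|^3}\le \max_i|y_i|\,\overline{y^2}=O(k^{-1/2})\,\overline{y^2}\). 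Taking \(\frac1{1-p}\log\) (or expanding \(-\sum\lambda\log\lambda\) directly when \(p=1\)) gives
\[
S_p(\lambda)=\log k-\tfrac p2\,\overline{y^2}\bigl(1+O(k^{-1/2})\bigr).
\]
So minimizing \(S_p\) amounts to maximizing \(\overline{y^2}\). Since centering reduces the sum of squares, \(\sum_i y_i^2\le(1+O(k^{-1/2}))\sum_i x_i^2\). Hence
\[
\overline{y^2}\le \frac{4(1-t)}{tk^2}\bigl(1+O(k^{-1/2})\bigr),
\]
which gives the lower bound \(S_p\ge \log k-2p(1-t)/(tk^2)+o(k^{-2})\).

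\textbf{Step 3: matching upper bound.} Take \(x_i=\pm\varepsilon\) on two halves of the indices, with one extra \(x_i=0\) if \(k\) is odd, so that \(\bar x=0\). Choose \(\varepsilon\) as large as possible subject to \(\sum_i c_t(t(1+x_i))\le 1/k\). By the expansion in Step 1,
\[
\varepsilon^2=\frac{4(1-t)}{tk^2}\bigl(1+O(k^{-1/2})\bigr)\quad\text{for even }k,
\]
with an extra factor \(k/(k-1)=1+O(k^{-1})\) for odd \(k\). This point lies in \(\mathscr D_{k,t}\), and plugging it into Step 2 attains the bound up to \(o(k^{-2})\).

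\textbf{Main obstacle.} The main obstacle is keeping every error term uniform in \(u\in\mathscr D_{k,t}\) and uniform in the Rényi order near \(p=1\), where the prefactor \(1/(1-p)\) degenerates. I would first establish the uniform localization bound \(\max_i|u_i-t|=O(k^{-1/2})\) in Step 1. Then I would handle \(p=1\) separately with the expansion \((1+y)\log(1+y)=y+y^2/2+O(|y|^3)\), which yields the same coefficient \(\tfrac p2\) at \(p=1\).
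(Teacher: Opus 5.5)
Your proposal is correct and follows essentially the same route as the paper. You localize $\mathscr D_{k,t}$ near $(t,\dots,t)$ using the quadratic behavior of $c_t$ at its unique zero, pass to the centered normalization $\lambda_i=(1+y_i)/k$ with the uniform expansion $S_p(\lambda)=\log k-\frac{p}{2k}\sum_i y_i^2+O(k^{-5/2})$, bound $\sum_i y_i^2\le\frac{4(1-t)}{tk}(1+o(1))$, and attain this bound with an explicit feasible vector; the only differences are cosmetic, namely your multiplicative parametrization $u_i=t(1+x_i)$ and $\pm\varepsilon$ test vector with maximal $\varepsilon$, where the paper uses $u_i=t+x_i$ and the shrunken cosine vector $u_i=t+\frac{2\sqrt{t(1-t)}}{k}(1-k^{-1/2})\sqrt2\cos(2\pi i/k)$.
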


\begin{proof}
It is enough to minimize over the eigenvalue body \(\Lambda_{k,t}\).
For all sufficiently large \(k\), its elements are the normalized vectors
\[
    \lambda_i=\frac{u_i}{\sum_j u_j},
    \qquad u\in\mathscr D_{k,t}.
\]
Write \(x_i=u_i-t\), \(\bar x=k^{-1}\sum_i x_i\), and
\[
    q_i=\frac{x_i-\bar x}{t+\bar x}.
\]
Then \(\lambda_i=k^{-1}(1+q_i)\) and \(\sum_iq_i=0\).

Since \(c_t\geq0\) has the unique zero \(t\) and
\(c_t(u_i)\leq k^{-1}\), compactness first gives
\(\max_i|u_i-t|\to0\).  Quadratic comparability near \(t\), followed
by the expansion
\[
    c_t(t+x)=\frac{x^2}{4t(1-t)}+O_t(|x|^3)
\]
and the constraint \(\sum_i c_t(u_i)\leq k^{-1}\), then imply, uniformly in
\(u\in\mathscr D_{k,t}\),
\[
    \max_i|x_i|=O_t(k^{-1/2}),
    \qquad
    \sum_i x_i^2
       \leq\frac{4t(1-t)}{k}+O_t(k^{-3/2}),
    \qquad
    \bar x=O_t(k^{-1}).
\]
Consequently,
\begin{equation}
    \max_i|q_i|=O_t(k^{-1/2}),
    \qquad
    \sum_iq_i^2
       \leq\frac{4(1-t)}{t k}+O_t(k^{-3/2}).
    \label{eq:app-q-bound-short}
\end{equation}
The leading constant is attained.  Indeed, for
\(v_i=\sqrt2\cos(2\pi i/k)\), set
\[
    u_i=t+\frac{2\sqrt{t(1-t)}}{k}
             (1-k^{-1/2})v_i.
\]
Since \(\sum_i v_i=0\), \(\sum_i v_i^2=k\), and
\(\max_i|v_i|\leq\sqrt2\), the preceding Taylor expansion shows that
\(u\in\mathscr D_{k,t}\) for all sufficiently large \(k\), and its
associated vector satisfies
\[
    \sum_iq_i^2
      =\frac{4(1-t)}{t k}+O_t(k^{-3/2}).
\]
Thus the supremum of \(\sum_iq_i^2\) over \(\mathscr D_{k,t}\) equals
\(4(1-t)/(tk)+O_t(k^{-3/2})\).

For \(p\neq1\), a uniform Taylor expansion gives
\[
    S_p(\lambda)
      =\log k-\frac{p}{2k}\sum_iq_i^2+O_{p,t}(k^{-5/2}).
\]
The same formula holds for \(p=1\), using
\((1+q)\log(1+q)=q+q^2/2+O(|q|^3)\).  Maximizing the quadratic term and
using \eqref{eq:app-q-bound-short} proves
\eqref{eq:app-single-entropy-asymptotics}.
\end{proof}

\begin{proposition}[Entropy of the limiting Bell output]
\label{prop:app-bell-entropy-asymptotics}
For every fixed \(t\in(0,1)\) and \(p\in(0,\infty)\),
\begin{equation}
    S_p\!\left(\lambda^{\mathrm{Bell}}_{k,t}\right)
    =2\log k-\frac{A_p(t)}{k^2}+O_{p,t}(k^{-4}),
    \label{eq:app-bell-entropy-asymptotics}
\end{equation}
where \(A_p(t)\) is defined in
\eqref{eq:def-Apt}--\eqref{eq:def-A1t}.
\end{proposition}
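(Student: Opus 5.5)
The computation is explicit, since the spectrum $\lambda^{\mathrm{Bell}}_{k,t}$ consists of one eigenvalue $\alpha=\alpha_{k,t}$ and $k^2-1$ copies of $\beta=\beta_{k,t}$, given in closed form by \eqref{eq:alpha-kt} and \eqref{eq:beta-kt}. The plan is to expand $\alpha$ and $\beta$ in powers of $k^{-2}$ and substitute into $S_p$.

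\textbf{Expanding the eigenvalues.} Write $\varepsilon=k^{-2}$. From \eqref{eq:alpha-kt}, dividing numerator and denominator by $k^6$ gives
\[
    \alpha=\frac{1-(1+t)\varepsilon+\varepsilon^2}{k^2t\,(1-2\varepsilon+\varepsilon^2/t)} .
\]
Hence $\alpha=\frac{\varepsilon}{t}\bigl(1+(1-t)\varepsilon+O(\varepsilon^2)\bigr)$, so $\alpha = t^{-1}k^{-2}+O(k^{-4})$.

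For $\beta$, it is cleaner to write $\beta=k^{-2}(1-r)$ with $r=r_{k,t}$. From \eqref{eq:r-kt}, $r=\frac{(1-t)}{t}k^{-2}+O(k^{-4})$. Thus $k^2\beta=1-\frac{1-t}{t}k^{-2}+O(k^{-4})$ and $k^2\alpha=1+(k^2-1)r = t^{-1}+O(k^{-2})$.

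\textbf{Case $p\neq1$.} Compute
\[
    \Tr\rho^p=\alpha^p+(k^2-1)\beta^p
    =k^{-2p}\Bigl[(k^2\alpha)^p+(k^2-1)(1-r)^p\Bigr].
\]
The bracket equals
\[
    k^2\Bigl[1+k^{-2}\bigl(t^{-p}-1-p(t^{-1}-1)\bigr)+O(k^{-4})\bigr].
\]
To see this, expand $(k^2-1)(1-r)^p=k^2-1-pk^2r+O(k^{-2})$, use $k^2r=(t^{-1}-1)+O(k^{-2})$, and note $(k^2\alpha)^p=t^{-p}+O(k^{-2})$.

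Then $\Tr\rho^p=k^{2-2p}\bigl(1+k^{-2}(p-1)A_p(t)+O(k^{-4})\bigr)$. Taking $\frac{1}{1-p}\log$ gives
\[
    S_p=2\log k-\frac{A_p(t)}{k^2}+O(k^{-4}).
\]

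The main care point is that the $O(k^{-2})$ corrections inside $(k^2\alpha)^p$ and $k^2r$ contribute only at order $k^{-4}$ after the overall $k^{-2}$ factor. This holds because the $1/k^2$-coefficient is assembled solely from leading terms, and everything is a rational function of $\varepsilon$ analytic at $0$, so the error is genuinely $O_{p,t}(k^{-4})$.

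\textbf{Case $p=1$.} Expand
\[
    -\alpha\log\alpha-(k^2-1)\beta\log\beta
\]
in the same variables, writing $\log\alpha=-2\log k+\log(k^2\alpha)$ and $\log\beta=-2\log k+\log(1-r)$. Since $\alpha+(k^2-1)\beta=1$, the $2\log k$ terms combine to exactly $2\log k$. What remains is
\[
    -\alpha\log(k^2\alpha)-(k^2-1)\beta\log(1-r)
    =-k^{-2}t^{-1}\log t^{-1}+(t^{-1}-1)k^{-2}+O(k^{-4}),
\]
which is $-A_1(t)k^{-2}+O(k^{-4})$.

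Alternatively, analyticity of $\Tr\rho^p$ in $p$ lets one pass to the limit $p\to1$ in the previous formula, using that $A_p\to A_1$ by L'Hôpital. I would take the direct computation to avoid uniformity issues.
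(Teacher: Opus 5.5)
Your proposal is correct and follows essentially the same route as the paper: you write $k^2\alpha_{k,t}=1+(k^2-1)r_{k,t}$ and $k^2\beta_{k,t}=1-r_{k,t}$ with $r_{k,t}=(t^{-1}-1)k^{-2}+O(k^{-4})$, expand $\Tr\rho^p$ (or the von Neumann entropy directly at $p=1$) to order $k^{-2}$, and take the logarithm. Your bookkeeping, in which the $O(k^{-2})$ corrections to $k^2\alpha_{k,t}$ and $k^2 r_{k,t}$ only enter at order $k^{-4}$, matches the paper's computation carried out with $m=k^2$ and $s=t^{-1}$.
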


\begin{proof}
Put \(m=k^2\), \(s=t^{-1}\), and \(\delta=s-1\).  The formula
\eqref{eq:r-kt} gives
\[
    r_{k,t}=\frac{\delta}{m}+O_t(m^{-2}).
\]
If \(\alpha_{k,t}\) is the exceptional Bell eigenvalue and
\(\beta_{k,t}\) is the eigenvalue of multiplicity \(m-1\), then
\begin{align*}
    x_m&:=m\alpha_{k,t}
       =1+(m-1)r_{k,t}=s+O_t(m^{-1}),\\
    y_m&:=m\beta_{k,t}
       =1-r_{k,t}=1-\frac{\delta}{m}+O_t(m^{-2}).
\end{align*}
For fixed \(p\neq1\), it follows that
\[
    x_m^p+(m-1)y_m^p
       =m+s^p-1-p(s-1)+O_{p,t}(m^{-1}).
\]
Substitution into the definition of \(S_p\), followed by
\(\log(1+z)=z+O(z^2)\), yields
\[
    S_p\!\left(\lambda^{\mathrm{Bell}}_{k,t}\right)
      =\log m-
        \frac{s^p-1-p(s-1)}{(p-1)m}+O_{p,t}(m^{-2}).
\]
For \(p=1\), the identity \(x_m+(m-1)y_m=m\) similarly gives
\[
    S_1\!\left(\lambda^{\mathrm{Bell}}_{k,t}\right)
      =\log m-\frac{s\log s-s+1}{m}+O_t(m^{-2}).
\]
These coefficients are \(A_p(t)\) and \(A_1(t)\), respectively; setting
\(m=k^2\) proves \eqref{eq:app-bell-entropy-asymptotics}.
\end{proof}

\bibliography{min}
\end{document}